\newtheorem{claim}{}[section]
\newtheorem{theorem}[claim]{Theorem}
\newtheorem{lemma}[claim]{Lemma}
\newtheorem{proposition}[claim]{Proposition}
\newtheorem{corollary}[claim]{Corollary}
\theoremstyle{remark}
\renewenvironment{proof}{\noindent{\it Proof. \hskip0pt}}
                      {$\square$\par\medskip}
\begin{document}
\baselineskip 6.0 truemm
\parindent 1.5 true pc

\newcommand\lan{\langle}
\newcommand\ran{\rangle}
\newcommand\tr{{\text{\rm Tr}}\,}
\newcommand\ot{\otimes}
\newcommand\ol{\overline}
\newcommand\join{\vee}
\newcommand\meet{\wedge}
\renewcommand\ker{{\text{\rm Ker}}\,}
\newcommand\image{{\text{\rm Im}}\,}
\newcommand\id{{\text{\rm id}}}
\newcommand\tp{{\text{\rm tp}}}
\newcommand\pr{\prime}
\newcommand\e{\epsilon}
\newcommand\la{\lambda}
\newcommand\inte{{\text{\rm int}}\,}
\newcommand\ttt{{\text{\rm t}}}
\newcommand\spa{{\text{\rm span}}\,}
\newcommand\conv{{\text{\rm conv}}\,}
\newcommand\rank{\ {\text{\rm rank of}}\ }
\newcommand\re{{\text{\rm Re}}\,}
\newcommand\ppt{\mathbb T}
\newcommand\rk{{\text{\rm rank}}\,}
\newcommand\SN{{\text{\rm SN}}\,}
\newcommand\SR{{\text{\rm SR}}\,}
\newcommand\HA{{\mathcal H}_A}
\newcommand\HB{{\mathcal H}_B}
\newcommand\HC{{\mathcal H}_C}
\newcommand\CI{{\mathcal I}}
\newcommand{\bra}[1]{\langle{#1}|}
\newcommand{\ket}[1]{|{#1}\rangle}
\newcommand\cl{\mathcal}
\newcommand\idd{{\text{\rm id}}}
\newcommand\OMAX{{\text{\rm OMAX}}}
\newcommand\OMIN{{\text{\rm OMIN}}}
\newcommand\diag{{\text{\rm Diag}}\,}
\newcommand\calI{{\mathcal I}}
\newcommand\bfi{{\bf i}}
\newcommand\bfj{{\bf j}}
\newcommand\bfk{{\bf k}}
\newcommand\bfl{{\bf l}}
\newcommand\bfzero{{\bf 0}}
\newcommand\bfone{{\bf 1}}

\title{Construction of multi-qubit optimal genuine entanglement witnesses}

\author{Kyung Hoon Han and Seung-Hyeok Kye}
\address{Department of Mathematics, The University of Suwon, Gyeonggi-do 445-743, Korea}
\email{kyunghoon.han at gmail.com}
\address{Department of Mathematics and Institute of Mathematics, Seoul National University, Seoul 151-742, Korea}
\email{kye at snu.ac.kr}
\thanks{KHH and SHK were partially supported by NRF-2012R1A1A1012190 and NRFK 2009-0083521., respectively.}

\subjclass{81P15, 15A30, 46L05, 46L07}

\keywords{}

\begin{abstract}
We interpret multi-partite genuine entanglement witnesses as simultaneous positivity
of various maps arising from them. We apply this result to multi-qubit {\sf X}-shaped
Hermitian matrices, and characterize the conditions
for them to be genuine entanglement witnesses, in terms of entries. Furthermore, we find
all optimal ones among them. They turn out to have the spanning
properties, and so they detect non-zero volume set of multi-qubit
genuine entanglement. We also characterize decomposability for {\sf
X}-shaped entanglement witnesses.
\end{abstract}

\maketitle

\section{Introduction}

The notion of entanglement is considered as the main resource in current quantum information theory,
and it is one of the key research topics to study how to detect entanglement from separability.
The duality between separable states and positive linear maps turned out to be very useful
for this purpose \cite{eom-kye,horo-1}, and was formulated as the notion of entanglement witnesses \cite{terhal}
in the bi-partite case.
Entanglement witnesses also have the obvious natural meaning in the multi-partite cases,
and they are now interpreted as the Choi matrices of positive multi-linear maps by the
duality between $n$-partite separable states and positive multi-linear maps with $(n-1)$ variables \cite{kye_multi_dual}.
In the multi-partite cases, there are many kinds of entanglement \cite{abls}.
Among them, genuine entanglement, especially multi-qubit genuine entanglement,
is at the central interest for quantum information processing,
and many authors suggested various methods to detect genuine entanglement from bi-separability.
See survey articles \cite{GT-rev,HHHH-rev}.
The notion of entanglement witnesses is also naturally extended in this case, and has been considered
by several authors. See \cite{{bancal},{berg13},{chenchen},{guhne11},{huber14},{jung},{kor05},{ryu},{toth05},{wu}},
for examples.

A multi-partite state $\varrho$ in the tensor product $\bigotimes_{i=1}^n M_{d_i}$ of matrix algebras $M_{d_i}$
on the $d_i$-dimensional Hilbert space $\mathbb C^{d_i}$ is
said to be {\sl (fully) separable} if it can be written as the convex combination
$$
\varrho
=\sum_k p_k |z_k\ran\lan z_k|
$$
of pure states $|z_k\ran\lan z_k|$ onto product vectors $|z_k\ran$, that is, simple tensors in
the tensor product $\bigotimes_{i=1}^n \mathbb C^{d_i}$ of Hilbert spaces. For a given bi-partition
$S\sqcup T$ of the set $[n]:=\{1,2,\dots,n\}$, a multi-partite state $\varrho$ may be considered as a bi-partite state
in the tensor product
$\left(\bigotimes_{i\in S} M_{d_i}\right)\otimes
\left(\bigotimes_{i\in T} M_{d_i}\right)$
of two matrix algebras,
and is said to be {\sl $S$-$T$ bi-separable} (respectively {\sl $S$-$T$ PPT})
if this bi-partite state is separable (respectively PPT). A multi-partite state $\varrho$ is called {\sl bi-separable}
(respectively {\sl a PPT mixture}) if
it is in the convex hull of $S$-$T$ bi-separable (respectively $S$-$T$ PPT) states through all bi-partitions $S\sqcup T=[n]$.
A state is said to be {\sl genuinely entangled} if it is not bi-separable.
We call a non-positive (non positive semi-definite) Hermitian matrix $W$ in $\bigotimes_{i=1}^n M_{d_i}$
{\sl genuine entanglement witness} if
$$
\lan \varrho, W\ran:=\tr (\varrho W^\ttt)\ge 0
$$
for every bi-separable state $\varrho$. Here, $W^\ttt$ denotes the transpose of $W$.
Non-positivity condition of $W$ guarantees the existence of a state $\varrho$ with $\lan\varrho,W\ran <0$, and the above condition
tells us that this $\varrho$ must be genuinely entangled.
By duality, any genuine entanglement is detected by a genuine entanglement witness.

In the three partite case,
the authors \cite{han_kye_tri} interpreted genuine entanglement witnesses as the Choi matrices
of $(p,q,r)$-positive bi-linear maps, and constructed various kinds of three-qubit entanglement witnesses.
In this paper, we interpret general multi-partite genuine entanglement witnesses
in terms of simultaneous positivity of various linear maps arising from bi-partitions $S\sqcup T=[n]$.
For this purpose, it is very important to set up notations. We will do it in the next section,
and describe the linear maps arising from bi-partitions. We also summarize results in this paper
in terms of these notations.

Motivated by examples of three-qubit entanglement witnesses constructed in \cite{{han_kye_tri}}{ and \cite{{kye_multi_dual}},
we apply this result in Section 3 to the so called {\sf X}-shaped multi-qubit witnesses
to characterize genuine entanglement witnesses in terms of entries.
Recall that a matrix is {\sf X}-{\sl shaped} if all the entries are
zero except for diagonal and anti-diagonal entries. Among genuine entanglement
witnesses we found, we characterize in Section 4 optimal ones which
turn out to have the spanning properties. Therefore, they detect
nontrivial set of genuine entanglement, that is, the set of genuine
entanglement detected by them have non-zero volume. In Section 5, we
also characterize decomposable witnesses, and see that every {\sf X}-shaped genuine entanglement witness
is decomposable.

We note that states with {\sf X}-shaped matrix forms have been studied by several authors
in various contexts. See \cite{{abls},{guhne10},{mendo},{Rafsanjani},{rau},{vin10},{wein10},{yu},{yu07}} for example.

\section{Notations and summary of results}

In order to deal with multi-partite systems, it is convenient to use multi-indices for entries of matrices.
Let $S$ be a nonempty subset of $[n]=\{1,2,\dots,n\}$.
A function $\bfi $ from $S$ into nonnegative integers with $0\le {\bfi}(i)<
d_i$ ($i\in S$) will be called an {\sl index} on $S$, which will be
denoted by a string of integers in the obvious sense.

For a given bi-partition $S\sqcup T=[n]$, it is clear with this notation
that any matrix $W$ in $\bigotimes_{i\in [n]} M_{d_i}$
can be written in a unique way by
$$
W=\sum_{{\bfi},{\bfj}\in I_S} |{\bfi}\ran\lan {\bfj}| \ot W[{\bfi},{\bfj}]\in
\left(\bigotimes_{i\in S} M_{d_i}\right)\otimes \left(\bigotimes_{i\in T} M_{d_i}\right)
$$
where $I_S$ denotes the set of all indices on the set $S$. We also use the notation
$$
|{\bfi}\ran=|i_1\ran\ot |i_2\ran\ot \cdots\ot |i_{\# S}\ran
$$
for ${\bfi}=i_1i_2\dots i_{\# S}\in I_S$, and similarly for $\lan {\bfj}|$, where $\#S$ denotes the cardinality of $S$.
For given indices ${\bfk},{\bfl}$ on $T$,
the $({\bfk},{\bfl})$-entry $W[{\bfi},{\bfj}]_{{\bfk},{\bfl}}$ of $W[{\bfi},{\bfj}]\in \bigotimes_{i\in T} M_{d_i}$ is given by
$$
W[{\bfi},{\bfj}]_{{\bfk},{\bfl}}=W_{{\bfi} \diamond {\bfk}, {\bfj} \diamond {\bfl}},
$$
where ${\bfi}\diamond {\bfk}$ is the index on $[n]$ defined by
$$
({\bfi} \diamond {\bfk})(i )=\begin{cases} {\bfi}(i ),& i \in S,\\ {\bfk}(i ),&i \in T.\end{cases}
$$

We note that the set $\{|{\bfi}\ran\lan {\bfj}| : {\bfi}, {\bfj} \in I_S \}$ plays the role of matrix units
for the matrix algebra $\bigotimes_{i\in S} M_{d_i}$. Therefore, we may define the linear map
\begin{equation}\label{st_map}
\phi_W^{S,T}: |{\bfi}\ran\lan {\bfj}| \in \bigotimes_{i\in S} M_{d_i}\to W[{\bfi},{\bfj}]
\in \bigotimes_{i\in T} M_{d_i}, \qquad {\bfi}, {\bfj} \in I_S.
\end{equation}
Conversely, for any given linear map $\phi : \bigotimes_{i\in S} M_{d_i} \to \bigotimes_{i\in T} M_{d_i}$,
we can associate the matrix $W_\phi\in\bigotimes_{i\in[n]}M_{d_i}$ by
$$
W_\phi=
\sum_{{\bfi},{\bfj} \in I_S} |{\bfi}\ran\lan {\bfj}| \ot \phi(|{\bfi}\ran\lan {\bfj}|)\in
\left(\bigotimes_{i\in S} M_{d_i}\right)\otimes \left(\bigotimes_{i\in T} M_{d_i}\right)=\bigotimes_{i\in[n]}M_{d_i}.
$$
When $n=2$ and $S=\{1\}$, $W_\phi$ is nothing but the usual Choi matrix \cite{choi75-10} of the linear map $\phi$ from $M_{d_1}$
into $M_{d_2}$.

We consider an example.
In the three qubit case, every {\sf X}-shaped matrix $W \in M_{d_1}\ot M_{d_2}\ot M_{d_3}$ with $d_i=2$
can be written by the usual $8\times 8$ matrix
$$
W=
\left(
\begin{matrix}
W_{000,000} &\cdot &\cdot &\cdot &\cdot &\cdot &\cdot &W_{000,111} \\
\cdot &W_{001,001} &\cdot &\cdot &\cdot &\cdot &W_{001,110} &\cdot \\
\cdot &\cdot &W_{010,010} &\cdot &\cdot &W_{010,101} &\cdot &\cdot \\
\cdot &\cdot &\cdot &W_{011,011} &W_{011,100} &\cdot &\cdot &\cdot \\
\cdot &\cdot &\cdot &W_{100,011} &W_{100,100} &\cdot &\cdot &\cdot \\
\cdot &\cdot &W_{101,010} &\cdot &\cdot &W_{101,101} &\cdot &\cdot \\
\cdot &W_{110,001} &\cdot &\cdot &\cdot &\cdot &W_{110,110}  &\cdot \\
W_{111,000} &\cdot &\cdot &\cdot &\cdot &\cdot &\cdot &W_{111,111}
\end{matrix}
\right),
$$
if we endow indices with the lexicographic order,
where $\cdot$ denotes zero.
The map $\phi_W^{\{2\},\{1,3\}}$ is a linear map from $M_{d_2}$ into $M_{d_1}\ot M_{d_3}$, and
the image of $|i\ran\lan j|\in M_{d_2}=M_2$ can be obtained by searching for the entries
which look like $W_{{\sf *}i{\sf *},{\sf *}j{\sf *}}$.
For example, the image of 
$|0\ran\lan 1|$ under $\phi_W^{\{2\},\{1,3\}}$ is given by
$$
\left(
\begin{matrix}
\cdot &\cdot &\cdot &W_{0{\bfzero}0,1{\bfone}1} \\
\cdot &\cdot &W_{0{\bfzero}1,1{\bfone}0} &\cdot \\
\cdot &W_{1{\bfzero}0,0{\bfone}1} &\cdot &\cdot \\
W_{1{\bfzero}1,0{\bfone}0} &\cdot &\cdot &\cdot \\
\end{matrix}
\right)\in M_{d_1}\otimes M_{d_3}=M_4.
$$

For a linear map $\phi : \bigotimes_{i\in S} M_{d_i} \to \bigotimes_{i\in T} M_{d_i}$
and a bi-partite state $\varrho\in \left(\bigotimes_{i\in S} M_{d_i}\right)\otimes \left(\bigotimes_{i\in T} M_{d_i}\right)$,
we have the the bilinear pairing $\lan\varrho,\phi\ran$
which coincides with
$\lan\varrho, W_\phi\ran := \tr (\varrho W^\ttt)$.
By the duality between bi-partite separability and positivity of linear maps, it is now clear that
the following are equivalent for a given bi-partition $S\sqcup T=[n]$:
\begin{itemize}
\item
$\lan\varrho, W\ran\ge 0$ for every $S$-$T$ bi-separable state $\varrho$.
\item
The linear map $\phi_W^{S,T}$ is positive.
\end{itemize}
We employ the duality between the convex hulls and the intersections, to get the
equivalence (i) $\Longleftrightarrow$ (ii) in the following:

\begin{proposition}
Let $W$ be a Hermitian matrix in $\bigotimes_{i\in [n]}M_{d_i}$. Then the following are equivalent:
\begin{enumerate}
\item[(i)]
$\lan\varrho, W\ran\ge 0$ for every bi-separable state $\varrho$.
\item[(ii)]
The linear map $\phi_W^{S,T}$ is positive for each bi-partition $S\sqcup T$ of $[n]$.
\item[(iii)]
The linear map $\phi_W^{S,T}$ is positive for each bi-partition $S\sqcup T$ of $[n]$ with $\# S\le \frac n2$.
\end{enumerate}
\end{proposition}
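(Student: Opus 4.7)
The plan is to derive (i)$\iff$(ii) from the fixed-bi-partition equivalence highlighted just before the statement---namely, that $\lan\varrho,W\ran\ge 0$ for every $S$-$T$ bi-separable $\varrho$ is equivalent to positivity of $\phi_W^{S,T}$---combined with the observation that bi-separability is by definition convex-combination membership in the union of the $S$-$T$ bi-separable cones across all bi-partitions $S\sqcup T=[n]$. The equivalence (ii)$\iff$(iii) will then follow from the symmetry between $\phi_W^{S,T}$ and $\phi_W^{T,S}$ under swapping the roles of $S$ and $T$.

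For (ii)$\Longrightarrow$(i), I would write a generic bi-separable state as $\varrho=\sum_\alpha \lambda_\alpha\varrho_\alpha$ with each $\varrho_\alpha$ being $S_\alpha$-$T_\alpha$ bi-separable for some bi-partition, and use linearity of the pairing to reduce the claim to the single-bi-partition case, which is exactly the fixed-bi-partition equivalence applied to each $\phi_W^{S_\alpha,T_\alpha}$. Conversely, (i)$\Longrightarrow$(ii) is immediate, because every $S$-$T$ bi-separable state is bi-separable, and the fixed-bi-partition equivalence then repackages the resulting inequality as positivity of $\phi_W^{S,T}$. No further cone-duality beyond the one stated is needed at this step.

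For (ii)$\iff$(iii), only (iii)$\Longrightarrow$(ii) requires work. Given any bi-partition $S\sqcup T=[n]$, at least one of $\#S,\#T$ is $\le n/2$; the problematic case is when $\#T\le n/2$, where (iii) only gives positivity of $\phi_W^{T,S}$. To upgrade this I would use the symmetry that $S$-$T$ and $T$-$S$ bi-separability denote the same notion---the two bi-partite decompositions $(\bigotimes_{i\in S}M_{d_i})\otimes(\bigotimes_{i\in T}M_{d_i})$ and its swap are canonically identified, and separability is invariant under this identification---so applying the fixed-bi-partition equivalence in both orders transfers positivity from $\phi_W^{T,S}$ to $\phi_W^{S,T}$. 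This swap step is the main (and essentially only) obstacle; a direct alternative is to check from the identity $W[\bfi,\bfj]_{\bfk,\bfl}=W_{\bfi\diamond\bfk,\bfj\diamond\bfl}$ that $\phi_W^{S,T}$ and $\phi_W^{T,S}$ are Hilbert--Schmidt adjoints up to transposition, and to observe that the adjoint of a positive linear map between matrix algebras is again positive, since $\tr[\phi(X)Y]\ge 0$ for all positive $X$ and $Y$ is a symmetric condition in $X$ and $Y$.
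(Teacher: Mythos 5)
Your proposal is correct and follows essentially the same route as the paper: the equivalence (i)$\iff$(ii) is the convex-hull/intersection duality made explicit by splitting a bi-separable state into its $S_\alpha$-$T_\alpha$ bi-separable pieces, and your ``direct alternative'' for (ii)$\iff$(iii) --- that $\phi_W^{T,S}$ is the transpose of $\phi_W^{S,T}$ via $W[\bfi,\bfj]_{\bfk,\bfl}=W_{\bfi\diamond\bfk,\bfj\diamond\bfl}$, and transposes of positive maps are positive --- is exactly the paper's argument. No gaps.
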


The statement (iii) is equivalent to (ii), because the map $\phi_W^{T,S}$ is the transpose of
$\phi_W^{S,T}$ whenever $S\sqcup T$ is a bi-partition of $[n]$. Indeed,
for given ${\bfi},{\bfj} \in I_S$ and ${\bfk},{\bfl}\in I_T$, we have
$$
\lan \phi_W^{S,T}(|{\bfi}\ran\lan {\bfj}|), |{\bfk}\ran\lan {\bfl}|\ran
=\lan W[{\bfi},{\bfj}],  |{\bfk}\ran\lan {\bfl}|\ran=W_{{\bfi} \diamond {\bfk}, {\bfj} \diamond {\bfl}},
$$
where $\lan\ \cdot\ , \ \cdot\ \ran$ is the bilinear pairing on the matrix algebra $\bigotimes_{i\in T} M_{d_i}$.
On the other hand, we also have
$$
\lan |{\bfi}\ran\lan {\bfj}|, \phi_W^{T,S}(|{\bfk}\ran\lan {\bfl}|)\ran
=\lan |{\bfi}\ran\lan {\bfj}|, W[{\bfk},{\bfl}])\ran
=W_{{\bfi} \diamond {\bfk}, {\bfj} \diamond {\bfl}},
$$
where $\lan\ \cdot\ , \ \cdot\ \ran$ is the bilinear pairing on the matrix algebra $\bigotimes_{i\in S} M_{d_i}$.

A multi-qubit matrix $X=[X_{\bfi,\bfj}]\in \bigotimes_{i=1}^n M_{d_i}$ with indices $\bfi,\bfj$ on $[n]$ is {\sf X}-shaped if and only if
$X_{\bfi,\bfj}$ is nonzero only when $\bfi=\bfj$ or $\bfi=\bar\bfj$, where
$$
\bar{\bfi}(i)=i+1\mod 2,\qquad i =1,2,\dots, n.
$$
For each  index ${\bfi}$ beginning with $0$ and numbers
$s_\bfi,t_\bfi,u_\bfi$, we denote by
$X_{\bfi}(s_\bfi,t_\bfi,u_\bfi)$ the Hermitian matrix in
$\bigotimes_{i\in [n]}M_{d_i}$ whose $({\bfi},{\bfi})$-th,
$(\bar{\bfi},\bar{\bfi})$-th and $({\bfi},\bar{\bfi})$-th entries
are given by $s_\bfi,t_\bfi$ and $u_\bfi$, respectively,
with zero entries otherwise. Then every multi-qubit {\sf
X}-shaped Hermitian matrix can be written by
\begin{equation}\label{notation}
X(s,t,u):=\sum_{{\bfi}\in B_0}X_{\bfi}(s_{\bfi}, t_{\bfi}, u_{\bfi}),
\end{equation}
for $s=\{s_\bfi:\bfi\in B_0\}$, $t=\{t_\bfi:\bfi\in B_0\}$ and $u=\{u_\bfi:\bfi\in B_0\}$,
where $B_0$ is the set of all indices beginning with $0$. If we endow the set $B_0$ with the lexicographic order,
and identify $B_0$ with $\{1,2,\dots, 2^{n-1}\}$ by the binary expansion
then this matrix (\ref{notation}) can be written as the following usual matrix
$$
\left(
\begin{matrix}
s_1 &&&&&&& u_1\\
& s_2 &&&&& u_2 & \\
&& \ddots &&& \iddots &&\\
&&& s_{2^{n-1}}&u_{2^{n-1}} &&&\\
&&& \bar u_{2^{n-1}}&t_{2^{n-1}}&&&\\
&& \iddots &&& \ddots &&\\
& \bar u_2 &&&&& t_2 &\\
\bar u_1 &&&&&&& t_1
\end{matrix}
\right).
$$

For a given {\sf X}-shaped Hermitian matrix $W=X(s,t,u)$,
we show in Section 3 that $\langle  \varrho,W \rangle \ge 0$ for any bi-separable state
$\varrho$ if and only if $\langle  \varrho,W \rangle \ge 0$ for any PPT mixture $\varrho$ if and only if the inequality
\begin{equation}\label{ineq_gew}
\sqrt{s_\bfi t_\bfi}+\sqrt{s_\bfj t_\bfj}\ge |u_\bfi|+|u_\bfj|
\end{equation}
holds for every choice of $\bfi,\bfj\in B_0$ with $\bfi\neq \bfj$.
We note that two diagonal entries $s_\bfi, t_\bfi$ of an {\sf X}-shaped genuine entanglement witness $W$
are allowed to be zero, even though the corresponding
anti-diagonal entries $u_\bfi, \bar{u}_\bfi$ are nonzero. This is the point why they are useful to detect genuine entanglement.

It was shown in \cite{{gao},{guhne10}} that if
an arbitrary multi-qubit state $\varrho$ whose diagonal and anti-diagonal parts are given by
$X(a,b,z)$ is bi-separable then the inequality
\begin{equation}\label{ineq_bi-sep}
\sum_{\bfj \ne \bfi} \sqrt {a_\bfj b_\bfj}\ge |z_\bfi|
\end{equation}
holds for each $\bfi\in B_0$. Using {\sf X}-shaped witnesses we
constructed, we see that this is necessary for PPT mixtures as well
as bi-separable states. We note that the inequality
(\ref{ineq_bi-sep}) is also known \cite{Rafsanjani} to be equivalent
to bi-separability for {\sf X}-shaped states.

We also find all optimal ones among {\sf X}-shaped genuine entanglement witnesses in Section 4.
An {\sf X}-shaped Hermitian $W=X(s,t,u)$ is an optimal genuine entanglement witness
if and only if it is a genuine entanglement witness with the spanning property if and only if
there exists an index $\bfi_0\in B_0$ such that
$s_{\bfi_0}=t_{\bfi_0}=0$, $|u_{\bfi_0}|=1$ and $s_\bfi t_\bfi=1$, $u_\bfi=0$ for $\bfi\neq \bfi_0$,
up to scalar multiplication.

In Section 5, we pay attention to decomposability of witnesses to show that a multi-qubit
{\sf X}-shaped witness $W=X(s,t,u)$ is decomposable if and only if
the inequality
\begin{equation}\label{ineq_decom}
\sum_{\bfi\in B_0}\sqrt {s_\bfi t_\bfi}\ge \sum_{\bfi\in B_0} |u_\bfi|
\end{equation}
holds. This shows that the notion of genuine entanglement witness is much stronger than decomposability
for {\sf X}-shaped witnesses. Indeed, only one pair $(s_\bfi,t_\bfi)$ of diagonal entries are allowed to be zero for
genuine entanglement witnesses, but all the diagonal entries except one pair of diagonal entries  may be
zero for decomposable matrices.
In the course of discussion, we also show that a multi-qubit {\sf X}-shaped state
$\varrho=X(a,b,z)$ is {\sl fully bi-separable}, that is, $S$-$T$ bi-separable for any bi-partition $S\sqcup T=[n]$
if and only if it is of PPT if and only if the inequality
\begin{equation}\label{ineq_ppt}
\sqrt{a_\bfi b_\bfi} \ge |z_\bfj|
\end{equation}
holds for every $\bfi,\bfj\in B_0$.


\section{{\sf X}-shaped multi-qubit genuine entanglement witnesses}

From now on, we restrict ourselves to the multi-qubit cases. So, $M_{d_i}$ will be the algebra $M_2$ of all
$2\times 2$ matrices, and an index will be a $\{0,1\}$-string.
We note that an {\sf X}-shaped matrix $W$ is positive, that is, positive semi-definite if and only if the following
$2\times 2$ matrix
$$
\left(\begin{matrix}
W_{{\bfi},{\bfi}} & W_{{\bfi},\bar{{\bfi}}}\\
W_{\bar{{\bfi}},{\bfi}} & W_{\bar{{\bfi}},\bar{{\bfi}}}
\end{matrix}\right)
$$
is positive for every ${\bfi} \in I_{[n]}$.

Now, we assume that $W$ is {\sf X}-shaped, and look for a condition with which the map
$\phi_W^{S,T}$ in (\ref{st_map}) is
positive. This map sends an element
$\sum_{{\bfi},{\bfj} \in  I_S} a_{{\bfi},{\bfj}}|{\bfi}\ran\lan {\bfj}|$  in $\bigotimes_{i\in S} M_{d_i}$ to
$\sum_{{\bfi},{\bfj} \in  I_S} a_{{\bfi},{\bfj}} W[{\bfi},{\bfj}] \in \bigotimes_{i\in T} M_{d_i}$,
which is
again {\sf X}-shaped, and so, $\sum_{{\bfi},{\bfj} \in  I_S} a_{{\bfi},{\bfj}} W[{\bfi},{\bfj}]$ is positive if and only if
$$
\sum_{{\bfi},{\bfj} \in  I_S} a_{{\bfi},{\bfj}}
\left(\begin{matrix}
W[{\bfi},{\bfj}]_{{\bfk},{\bfk}} &W[{\bfi},{\bfj}]_{{\bfk},\bar{\bfk}}\\
W[{\bfi},{\bfj}]_{\bar{\bfk},{\bfk}} &W[{\bfi},{\bfj}]_{\bar{\bfk},\bar{\bfk}}
\end{matrix}\right)\in M_2
$$
is positive
for each ${\bfk} \in I_T$. Therefore, we see that the map $\phi_W^{S,T}$ is positive if and only if
the map
$$
\Phi_{\bfk} : \sum_{{\bfi},{\bfj} \in  I_S} a_{{\bfi},{\bfj}}|{\bfi}\ran\lan {\bfj}|
\in \bigotimes_{i\in S} M_{d_i} \mapsto \sum_{{\bfi},{\bfj} \in  I_{S}} a_{{\bfi},{\bfj}}
\left(\begin{matrix}
W_{{\bfi} \diamond {\bfk}, {\bfj} \diamond {\bfk}} &W_{{\bfi} \diamond {\bfk}, {\bfj} \diamond {\bar{\bfk}}}\\
W_{{\bfi} \diamond {\bar{\bfk}}, {\bfj} \diamond {\bfk}} &W_{{\bfi} \diamond {\bar{\bfk}}, {\bfj} \diamond {\bar{\bfk}}}
\end{matrix}\right)\in M_2
$$
is positive for each ${\bfk} \in I_T$.
Now, we consider the transpose
$\Phi_{\bfk}^{\rm t} : M_2 \to \bigotimes_{i\in S} M_{d_i}$
of the map $\Phi_{\bfk}$, and the Choi matrix of $\Phi_{\bfk}^{\rm t}$.
For $p,q \in \{0,1\}$, we have
$$
\begin{aligned}
\Phi_{\bfk}^{\rm t} (|p\ran \lan q|)_{{\bfi},{\bfj}} & = \left\lan |p\ran \lan q|, \Phi_{\bfk} (|{\bfi}\ran \lan {\bfj}|) \right\ran \\
& = \left\lan |p\ran \lan q|, \left(\begin{matrix}
W_{{\bfi} \diamond {\bfk}, {\bfj} \diamond {\bfk}} &W_{{\bfi} \diamond {\bfk}, {\bfj} \diamond {\bar{\bfk}}}\\
W_{{\bfi} \diamond {\bar{\bfk}}, {\bfj} \diamond {\bfk}} &W_{{\bfi} \diamond {\bar{\bfk}},
{\bfj} \diamond {\bar{\bfk}}} \end{matrix}\right) \right\ran \\
& = \left\lan |p\ran \lan q|, \left(\begin{matrix}
W[{\bfk},{\bfk}]_{{\bfi},{\bfj}} &W[{\bfk},\bar{\bfk}]_{{\bfi},{\bfj}}\\
W[\bar{{\bfk}},{\bfk}]_{{\bfi},{\bfj}} &W[\bar{{\bfk}},\bar{\bfk}]_{{\bfi},{\bfj}}
\end{matrix}\right) \right\ran.
\end{aligned}
$$
Hence, the Choi matrix of $\Phi_{\bfk}^{\rm t}$ is given as
\begin{equation}\label{2->S}
W_{\bfk} :=
\left(\begin{matrix}
\Phi_{\bfk}^{\rm t}(|0\ran\lan0|) & \Phi_{\bfk}^{\rm t}(|0\ran\lan1|)\\
\Phi_{\bfk}^{\rm t}(|1\ran\lan0|) & \Phi_{\bfk}^{\rm t}(|1\ran\lan1|)
\end{matrix}\right)
=\left(\begin{matrix}
W[{\bfk},{\bfk}] &W[{\bfk},\bar{\bfk}]\\
W[\bar{{\bfk}},{\bfk}] &W[\bar{{\bfk}},\bar{\bfk}]
\end{matrix}\right)
\in M_2\left(\bigotimes_{i\in S} M_{d_i}\right).
\end{equation}
A bi-partite self-adjoint matrix is said to be block-positive if the pairing with any separable state is nonnegative. We recall that the positivity of a linear map is equivalent to block-positivity of its Choi matrix, and so,
we have the following:

\begin{lemma}\label{2->Sprop}
Suppose that $W$ is an {\sf X}-shaped matrix in $\bigotimes_{i\in [n]} M_{d_i}$ with $d_i=2$. For a bi-partition $S\sqcup T=[n]$,
the following are equivalent:
\begin{enumerate}
\item[(i)]
the map $\phi_W^{S,T} : \bigotimes_{i\in S} M_{d_i} \to \bigotimes_{i\in T} M_{d_i}$ is positive;
\item[(ii)]
the matrix $W_{\bfk}$ given as {\rm (\ref{2->S})} is block positive in $M_2\left(\bigotimes_{i\in S} M_{d_i}\right)$, or equivalently
the map $\Phi_{\bfk}^{\rm t} : M_2 \to \bigotimes_{i\in S} M_{d_i}$ is positive, for each ${\bfk} \in I_T$.
\end{enumerate}
\end{lemma}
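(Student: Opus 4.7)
The plan is to chain together three equivalences that are essentially already visible in the computation preceding the lemma. The first and central observation is that, because $W$ is {\sf X}-shaped, the image $\phi_W^{S,T}(A)$ of any $A\in\bigotimes_{i\in S}M_{d_i}$ is itself {\sf X}-shaped in $\bigotimes_{i\in T}M_{d_i}$. Indeed, $W[{\bfi},{\bfj}]_{{\bfk},{\bfl}}=W_{{\bfi}\diamond{\bfk},{\bfj}\diamond{\bfl}}$ can be nonzero only when ${\bfi}\diamond{\bfk}$ equals ${\bfj}\diamond{\bfl}$ or its bit-flip, which after splitting into $S$- and $T$-coordinates forces either $({\bfi}={\bfj},\,{\bfk}={\bfl})$ or $({\bfi}=\bar{\bfj},\,{\bfk}=\bar{\bfl})$. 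Every $W[{\bfi},{\bfj}]$ therefore has support in the {\sf X}-positions of $\bigotimes_{i\in T}M_{d_i}$, and the same holds for any linear combination of these.

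Next I would apply the elementary fact, recalled at the start of the section, that an {\sf X}-shaped matrix is positive precisely when each of its $2\times 2$ principal submatrices at positions $({\bfk},\bar{\bfk})$ is positive. Accordingly, $\phi_W^{S,T}$ is positive if and only if, for every positive $A=\sum a_{{\bfi},{\bfj}}|{\bfi}\ran\lan{\bfj}|$ and every ${\bfk}\in I_T$, the $2\times 2$ matrix obtained by collecting the $({\bfk},{\bfk})$, $({\bfk},\bar{\bfk})$, $(\bar{\bfk},{\bfk})$ and $(\bar{\bfk},\bar{\bfk})$ entries of $\phi_W^{S,T}(A)$ is positive in $M_2$. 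Reading the entries through $W[{\bfi},{\bfj}]_{{\bfk},{\bfl}}=W_{{\bfi}\diamond{\bfk},{\bfj}\diamond{\bfl}}$, this $2\times 2$ matrix is exactly $\Phi_{\bfk}(A)$, so positivity of $\phi_W^{S,T}$ is equivalent to positivity of $\Phi_{\bfk}$ for every ${\bfk}\in I_T$.

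To convert positivity of $\Phi_{\bfk}$ into the statement in (ii), I would invoke two standard facts: first, a linear map between matrix algebras is positive if and only if its transpose is positive, which transfers positivity of $\Phi_{\bfk}$ to positivity of $\Phi_{\bfk}^{\rm t}\colon M_2\to\bigotimes_{i\in S}M_{d_i}$; and second, a linear map is positive if and only if its Choi matrix is block-positive. Combined, these identify positivity of $\Phi_{\bfk}^{\rm t}$ with block-positivity of $W_{\bfk}$ as displayed in (\ref{2->S}), whose entries have already been computed in the paragraph preceding the lemma. Chaining the three equivalences then yields (i) $\Longleftrightarrow$ (ii).

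Since all computational content has been prepared in advance, the argument is mostly bookkeeping. The only point requiring any care is verifying that $\phi_W^{S,T}(A)$ is {\sf X}-shaped for \emph{every} input $A$, not merely for positive or {\sf X}-shaped ones, since this is what legitimizes reducing positivity of the image to the $2\times 2$ sub-blocks at $({\bfk},\bar{\bfk})$ and, crucially, lets us pull the sum over $({\bfi},{\bfj})$ inside these sub-blocks so that the result is recognized as $\Phi_{\bfk}(A)$.
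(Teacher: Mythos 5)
Your argument is correct and follows essentially the same route as the paper's: reduce positivity of $\phi_W^{S,T}$ to positivity of the $2\times 2$ compressions $\Phi_{\bfk}$ via the {\sf X}-shape of the image, then pass to $\Phi_{\bfk}^{\rm t}$ and its Choi matrix $W_{\bfk}$ using the standard transpose and Choi--Jamio\l kowski facts. The only difference is that you spell out why each $W[{\bfi},{\bfj}]$ (and hence $\phi_W^{S,T}(A)$) is {\sf X}-shaped, a detail the paper merely asserts.
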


Since $W_{\bfk}$ is again {\sf X}-shaped, we can apply Lemma \ref{2->Sprop} to the matrix
$W_{\bfk} \in M_2 \otimes (\bigotimes_{i\in S} M_{d_i})$ with the given bi-partition.
Note that the map (\ref{st_map}) associated with $W_{\bfk}$ coincides with
$\Phi_{\bfk}^{\rm t}:M_2\to \bigotimes_{i\in S} M_{d_i}$ by (\ref{2->S}). Therefore, the map
$\Phi_{\bfk}^{\rm t}$ is positive if and only if for each ${\bfi} \in I_S$ the matrix
\begin{equation}\label{ki}
\begin{aligned}
(W_{\bfk})_{\bfi} = &
\left(\begin{matrix}
W_{\bfk}[{\bfi},{\bfi}] & W_{\bfk}[{\bfi},\bar{\bfi}] \\
W_{\bfk}[\bar{\bfi},{\bfi}] & W_{\bfk}[\bar{\bfi},\bar{\bfi}]
\end{matrix}\right)
\\
= & \left(\begin{matrix}
W_{{\bfi} \diamond {\bfk}, {\bfi} \diamond {\bfk}}
& W_{{\bfi} \diamond {\bfk}, {\bfi} \diamond {\bar{\bfk}}}
&W_{{\bfi} \diamond {\bfk}, {\bar{\bfi}} \diamond {\bfk}}
& W_{{\bfi} \diamond {\bfk}, {\bar{\bfi}} \diamond {\bar{\bfk}}} \\
W_{{\bfi} \diamond {\bar{\bfk}}, {\bfi} \diamond {\bfk}}
& W_{{\bfi} \diamond {\bar{\bfk}}, {\bfi} \diamond {\bar{\bfk}}}
&W_{{\bfi} \diamond {\bar{\bfk}}, {\bar{\bfi}} \diamond {\bfk}}
& W_{{\bfi} \diamond {\bar{\bfk}}, {\bar{\bfi}} \diamond {\bar{\bfk}}} \\
W_{{\bar{\bfi}} \diamond {\bfk}, {\bfi} \diamond {\bfk}}
& W_{{\bar{\bfi}} \diamond {\bfk}, {\bfi} \diamond {\bar{\bfk}}}
&W_{{\bar{\bfi}} \diamond {\bfk}, {\bar{\bfi}} \diamond {\bfk}}
& W_{{\bar{\bfi}} \diamond {\bfk}, {\bar{\bfi}} \diamond {\bar{\bfk}}} \\
W_{{\bar{\bfi}} \diamond {\bar{\bfk}}, {\bfi} \diamond {\bfk}}
& W_{{\bar{\bfi}} \diamond {\bar{\bfk}}, {\bfi} \diamond {\bar{\bfk}}}
&W_{{\bar{\bfi}} \diamond {\bar{\bfk}}, {\bar{\bfi}} \diamond {\bfk}}
& W_{{\bar{\bfi}} \diamond {\bar{\bfk}}, {\bar{\bfi}} \diamond {\bar{\bfk}}}
\end{matrix}\right)\\
= & \left(\begin{matrix}
W_{{\bfi} \diamond {\bfk}, {\bfi} \diamond {\bfk}}
& \cdot
& \cdot
& W_{{\bfi} \diamond {\bfk}, {\bar{\bfi}} \diamond {\bar{\bfk}}} \\
\cdot
& W_{{\bfi} \diamond {\bar{\bfk}}, {\bfi} \diamond {\bar{\bfk}}}
&W_{{\bfi} \diamond {\bar{\bfk}}, {\bar{\bfi}} \diamond {\bfk}}
& \cdot \\
\cdot
& W_{{\bar{\bfi}} \diamond {\bfk}, {\bfi} \diamond {\bar{\bfk}}}
&W_{{\bar{\bfi}} \diamond {\bfk}, {\bar{\bfi}} \diamond {\bfk}}
& \cdot \\
W_{{\bar{\bfi}} \diamond {\bar{\bfk}}, {\bfi} \diamond {\bfk}}
& \cdot
& \cdot
& W_{{\bar{\bfi}} \diamond {\bar{\bfk}}, {\bar{\bfi}} \diamond {\bar{\bfk}}}
\end{matrix}\right) \in M_2(M_2)
\end{aligned}
\end{equation}
is block positive. Therefore, we have the following:

\begin{theorem}\label{block}
Suppose that $W$ is an {\sf X}-shaped multi-qubit Hermitian matrix.
For a bi-partition $S\sqcup T=[n]$,
the following are equivalent:
\begin{enumerate}
\item[(i)]
the map $\phi_W^{S,T}$ is positive;
\item[(ii)]
the matrix {\rm (\ref{ki})} is block positive for every ${\bfi} \in I_S$ and ${\bfk} \in I_T$.
\end{enumerate}
\end{theorem}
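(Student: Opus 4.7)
The plan is to iterate Lemma \ref{2->Sprop} once more, exploiting the fact that the {\sf X}-shape is preserved under the construction $W\mapsto W_\bfk$ displayed in (\ref{2->S}). By a first application of Lemma \ref{2->Sprop} to $W$ with the given bi-partition $S\sqcup T=[n]$, positivity of $\phi_W^{S,T}$ is equivalent, for every $\bfk\in I_T$, to positivity of $\Phi_\bfk^{\rm t}:M_2\to\bigotimes_{i\in S}M_{d_i}$, which in turn is equivalent to block positivity of the Choi matrix $W_\bfk$.

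Next I would observe that $W_\bfk$ inherits the {\sf X}-shape when viewed as a matrix on $\#S+1$ qubits. This follows because the bar operation is compatible with concatenation, $\overline{\bfi\diamond\bfk}=\bar\bfi\diamond\bar\bfk$, so the only possibly nonzero entries of $W_\bfk$ correspond to row-column pairs that are either equal or complementary. The discussion preceding (\ref{ki}) already identifies $\Phi_\bfk^{\rm t}$ with the map associated to $W_\bfk$ via (\ref{st_map}), so a second application of Lemma \ref{2->Sprop} -- now to the {\sf X}-shaped matrix $W_\bfk$ with the bi-partition separating the new $M_2$ factor from $\bigotimes_{i\in S}M_{d_i}$ -- yields that $\Phi_\bfk^{\rm t}$ is positive if and only if $(W_\bfk)_\bfi$ is block positive in $M_2(M_2)$ for every $\bfi\in I_S$. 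Chaining the two equivalences across all $\bfk\in I_T$ and $\bfi\in I_S$ completes the argument.

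The only remaining obstacle is bookkeeping: one must confirm that the explicit $4\times 4$ form displayed in (\ref{ki}) is exactly the ``second-level'' block matrix produced by the iteration. This is a direct unwinding of definitions -- the four nonzero diagonal entries are indexed by $\bfi\diamond\bfk$, $\bfi\diamond\bar\bfk$, $\bar\bfi\diamond\bfk$, $\bar\bfi\diamond\bar\bfk$ on $[n]$, and the four anti-diagonal entries arise from the two complementary pairs $\{\bfi\diamond\bfk,\bar\bfi\diamond\bar\bfk\}$ and $\{\bfi\diamond\bar\bfk,\bar\bfi\diamond\bfk\}$, matching (\ref{ki}) entry by entry.
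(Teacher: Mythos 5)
Your proposal is correct and follows essentially the same route as the paper: a first application of Lemma \ref{2->Sprop} reduces positivity of $\phi_W^{S,T}$ to positivity of each $\Phi_{\bfk}^{\rm t}$, and a second application to the again {\sf X}-shaped matrix $W_{\bfk}$ (whose associated map under (\ref{st_map}) is exactly $\Phi_{\bfk}^{\rm t}$) reduces that to block positivity of $(W_{\bfk})_{\bfi}$, which is the matrix (\ref{ki}). Your added justification that $W_{\bfk}$ inherits the {\sf X}-shape via $\overline{\bfi\diamond\bfk}=\bar{\bfi}\diamond\bar{\bfk}$ is a point the paper merely asserts, and it is correct.
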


For a given bi-partition $[n]=S\sqcup T$ and ${\bfi}\in I_S, {\bfk}\in I_T$, we see that
$\bfi \diamond\bfk  \neq\bfi \diamond \bar{\bfk}$,
$\bfi \diamond\bfk  \neq\bar{\bfi}\diamond \bfk $ and the matrix (\ref{ki}) is of the form
\begin{equation}\label{2x2_matrix}
\left(\begin{matrix}
W_{{\bfi},{\bfi}} & \cdot & \cdot & W_{{\bfi},\bar{\bfi}} \\
\cdot & W_{{\bfj},{\bfj}} & W_{{\bfj},\bar{\bfj}} & \cdot \\
\cdot & W_{\bar{\bfj},{\bfj}} & W_{\bar{\bfj},\bar{\bfj}} & \cdot \\
W_{\bar{\bfi},{\bfi}} & \cdot & \cdot & W_{\bar{\bfi},\bar{\bfi}}
\end{matrix}\right) \in M_2(M_2)
\end{equation}
with indices $\bfi , \bfj $ on $[n]$ satisfying $\bfi \neq \bfj $ and $\bfi \neq \bar{\bfj}$.
Conversely, if $\bfi $ and $\bfj $ are indices on $[n]$ with ${\bfj} \ne {\bfi}, \bar{\bfi}$, then we put
$$
S=\{i\in [n]: {\bfi}(i)= {\bfj}(i)\} \quad \text{and} \quad T=\{i\in [n]: {\bfi}(i)\ne {\bfj}(i)\}.
$$
Then, $S \sqcup T$ is a bi-partition of $[n]$, and we have
${\bfi}={\bfi}|_S \diamond {\bfi}|_T$ and
${\bfj}={\bfj}|_S \diamond {\bfj}|_T = {\bfi}|_S \diamond \overline{{\bfi}|}_{S^c}$.
Therefore, we have the relations (i) $\Longleftrightarrow$ (ii) $\Longleftrightarrow$ (v) in
Theorem \ref{main} below.

In Theorem \ref{main}, we will also show that a non-positive {\sf X}-shaped multi-qubit
Hermitian matrix is a genuine entanglement witness if and only if
 $\langle W, \varrho \rangle \ge 0$ for any PPT mixture $\varrho$. In order to discuss this part, we need the notion of
 partial transposes for multi-partite systems.
 For a given subset $S \subset \{1,2,\cdots,n\}$, the partial transpose $T(S)$ on $\bigotimes_{i\in [n]} M_{d_i}$
is the linear map satisfying
\begin{equation}\label{par-transpose}
(a_1\ot a_2\ot\cdots\ot a_n)^{T(S)}:=b_1\ot b_2\ot\cdots\ot b_n,
\quad \text{\rm with}\ b_i=\begin{cases} a_i^\ttt, &i\in S,\\ a_i,
&i \notin S,\end{cases}
\end{equation}
where $a^\ttt$ denotes the transpose of the matrix $a$.
For an index $\bfi$ on $[n]$ and a subset $S$ of $[n]$, we also define the index $\bar{\bfi}^S$ by
\begin{equation}\label{bar_s}
\bar{\bfi}^S(i)=
\begin{cases}
i+1\mod 2, & i\in S, \\
i, & i \notin S.
\end{cases}
\end{equation}
We note that $\bar\bfi$ is nothing but $\bar{\bfi}^{[n]}$ with this definition.
We have the relations
\begin{equation}\label{pt_unit}
|{\bfi}\ran\lan {\bfi}|^{T(S)}=
|{\bfi}\ran\lan {\bfi}|,
\qquad
|{\bfi}\ran\lan \bar{\bfi}|^{T(S)}=
|\bar{\bfi}^S \ran\lan \bar{\bfi}^{S^c}|.
\end{equation}
If we write ${\bfj}=\bar{\bfi}^{S}$ then
$T(S)$ sends $|{\bfi}\rangle\langle \bar{\bfi}|$ to
 $|{\bfj}\rangle\langle \bar{\bfj}|$.

\begin{theorem}\label{main}
Suppose that $W=X(s,t,u)$ is an {\sf X}-shaped multi-qubit Hermitian matrix {\rm (\ref{notation})}
with nonnegative diagonals.
Then the following are equivalent:
\begin{enumerate}
\item[(i)]
$\lan\varrho, W\ran\ge 0$ for every $n$ qubit bi-separable state $\varrho$;
\item[(ii)]
the map $\phi_W^{S,T}$ is positive for any bi-partition $S\sqcup T=[n]$;
\item[(iii)]
$\lan\varrho, W\ran\ge 0$ for every $n$ qubit PPT mixture $\varrho$;
\item[(iv)]
for every nontrivial subset $S$ of $[n]$, there are positive $P$ and $Q$ such that $W=P+Q^{T(S)}$;
\item[(v)]
the matrix {\rm (\ref{2x2_matrix})}
is block positive for every indices $\bfi , \bfj$ with  ${\bfj} \ne {\bfi}, \bar{\bfi}$;
\item[(vi)]
the inequality {\rm (\ref{ineq_gew})}
holds for every indices $\bfi , \bfj \in B_0$ with  ${\bfi} \ne {\bfj}$.
\end{enumerate}
\end{theorem}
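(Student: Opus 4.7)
The equivalences $(i)\Leftrightarrow(ii)\Leftrightarrow(v)$ are already established in the Proposition and in the discussion preceding the theorem, so my remaining job is to connect these with $(vi)$, $(iv)$, and $(iii)$. My plan is: $(v)\Leftrightarrow(vi)$ by direct calculation with AM-GM, $(v)\Rightarrow(iv)$ by applying the St\o rmer-Woronowicz decomposability theorem pair-by-pair, and the routine implications $(iv)\Rightarrow(iii)\Rightarrow(i)$.

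For $(v)\Leftrightarrow(vi)$, I would expand $\langle\phi\otimes\psi,M(\phi\otimes\psi)\rangle$, where $M$ is the $4\times 4$ matrix in (\ref{2x2_matrix}) viewed as an element of $M_2\otimes M_2$ and $\phi,\psi\in\mathbb C^2$. With the abbreviations $x_r=|\phi_r|^2$, $y_r=|\psi_r|^2$, $\zeta=\bar\phi_0\phi_1$ and $\eta=\bar\psi_0\psi_1$, the quadratic form becomes $s_\bfi x_0y_0+t_\bfi x_1y_1+s_\bfj x_0y_1+t_\bfj x_1y_0+2\operatorname{Re}(\zeta\eta\,u_\bfi)+2\operatorname{Re}(\zeta\bar\eta\,u_\bfj)$. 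Since $\arg(\zeta\eta)$ and $\arg(\zeta\bar\eta)$ may be chosen independently through the phases of $\zeta,\eta$, minimizing over these phases reduces block positivity to the requirement $s_\bfi x_0y_0+t_\bfi x_1y_1+s_\bfj x_0y_1+t_\bfj x_1y_0\ge 2\sqrt{x_0x_1y_0y_1}(|u_\bfi|+|u_\bfj|)$ for all nonnegative $x_r,y_r$; applying AM-GM to the pairs $(s_\bfi x_0y_0,\,t_\bfi x_1y_1)$ and $(s_\bfj x_0y_1,\,t_\bfj x_1y_0)$---and noting that both estimates can be simultaneously saturated by a suitable single choice of the ratios $x_1/x_0,\,y_1/y_0$---identifies this inequality with (vi).

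For the main step $(v)\Rightarrow(iv)$, fix a nontrivial $S\subsetneq[n]$ and define the map $\pi_S\colon B_0\to B_0$ sending $\bfi$ to whichever of $\bar{\bfi}^S,\bar{\bfi}^{S^c}$ lies in $B_0$. Since $\bar{\bfi}^S=\bfi$ would force $S=\emptyset$ and $\bar{\bfi}^{S^c}=\bfi$ would force $S=[n]$, the involution $\pi_S$ is fixed-point-free and thus partitions $B_0$ into disjoint pairs $\{\bfi,\bfj\}$. For each such pair the four indices $\bfi,\bar{\bfi},\bfj,\bar{\bfj}$ are distinct, and the subspaces $V_{\bfi,\bfj}:=\mathrm{span}\{|\bfi\rangle,|\bar{\bfi}\rangle,|\bfj\rangle,|\bar{\bfj}\rangle\}$ are mutually orthogonal and together span the Hilbert space. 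Because $W$ is {\sf X}-shaped, every nonzero entry of $W$ lies entirely inside one $V_{\bfi,\bfj}$, so $W$ orthogonally block-diagonalizes along these pairs with each block equal to the matrix (\ref{2x2_matrix}); moreover (\ref{pt_unit}) together with the involutive nature of $\pi_S$ forces each $V_{\bfi,\bfj}$ to be $T(S)$-invariant. Choosing a basis identification $V_{\bfi,\bfj}\cong\mathbb C^2\otimes\mathbb C^2$ under which $T(S)|_{V_{\bfi,\bfj}}$ coincides with a standard partial transpose on one tensor factor (the labelling of $|\bfj\rangle,|\bar{\bfj}\rangle$ being adjusted as dictated by (\ref{pt_unit})), the St\o rmer-Woronowicz theorem---every block-positive operator in $M_2\otimes M_2$ is decomposable---applied to the block-positive matrix $W|_{V_{\bfi,\bfj}}$ guaranteed by (v) produces $W|_{V_{\bfi,\bfj}}=P_{\bfi,\bfj}+Q_{\bfi,\bfj}^{T(S)}$ with $P_{\bfi,\bfj},Q_{\bfi,\bfj}\ge 0$ supported on $V_{\bfi,\bfj}$. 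Summing over pairs yields $W=P+Q^{T(S)}$ as required.

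The remaining implications are standard. For $(iv)\Rightarrow(iii)$, write a PPT mixture $\varrho=\sum_k p_k\varrho_k$ with each $\varrho_k$ being $S_k$-$T_k$ PPT, and apply (iv) for the subset $S_k$ to obtain $\langle\varrho_k,W\rangle=\langle\varrho_k,P_k\rangle+\langle\varrho_k^{T(S_k)},Q_k\rangle\ge 0$; summing gives $\langle\varrho,W\rangle\ge 0$. The implication $(iii)\Rightarrow(i)$ is immediate because every bi-separable state is a PPT mixture. The principal technical obstacle in the whole argument is $(v)\Rightarrow(iv)$: verifying that $\pi_S$ is fixed-point-free, that $W$ orthogonally block-diagonalizes exactly along this matching, and that the abstract restriction $T(S)|_{V_{\bfi,\bfj}}$ is conjugate to a genuine partial transpose on $\mathbb C^2\otimes\mathbb C^2$ so that St\o rmer-Woronowicz can be cleanly invoked.
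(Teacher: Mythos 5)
Your proof is correct, and it follows the paper's overall architecture --- take (i)$\Leftrightarrow$(ii)$\Leftrightarrow$(v) from the preceding discussion, close the cycle through (vi) and (iv), and finish with the duality implications (iv)$\Rightarrow$(iii)$\Rightarrow$(i) --- but it differs in two places worth recording. First, for (v)$\Leftrightarrow$(vi) the paper simply cites Lemma 6.1 of the authors' earlier work, whereas you reprove it from scratch by minimizing over the independent phases of $\bar\phi_0\phi_1$ and $\bar\psi_0\psi_1$ and then applying AM--GM to the two diagonal pairs; this is a sound, self-contained substitute (your reduction to $\inf_{a,b>0}\bigl(s_\bfi a+t_\bfi/a+s_\bfj b+t_\bfj/b\bigr)$ with $a=x_0y_0$, $b=x_0/y_0$ independent is exactly right), with the one cosmetic caveat that when some $s$ or $t$ vanishes the AM--GM bound is approached in a limit rather than literally ``saturated.'' Second, and more substantively, for (v)$\Rightarrow$(iv) you orthogonally block-diagonalize all of $W$ along the fixed-point-free matching $\pi_S$ and apply the St\o rmer--Woronowicz decomposition to \emph{every} $4\times 4$ block, whereas the paper first uses (vi) to show there is at most one index $\bfi_0$ with $\sqrt{s_{\bfi_0}t_{\bfi_0}}<|u_{\bfi_0}|$, so that $W-D$ is already positive for the single block $D$ containing $\bfi_0$, and only that one block needs to be decomposed. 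Your version is more uniform and does not rely on the uniqueness of the violating index, at the cost of the extra bookkeeping you correctly identify (checking that $\pi_S$ is fixed-point-free, that the four indices $\bfi,\bar\bfi,\bfj,\bar\bfj$ are distinct, and that $T(S)$ restricted to each $V_{\bfi,\bfj}$ is a genuine partial transpose under the identification $V_{\bfi,\bfj}\cong\operatorname{span}\{\bfi|_{S^c},\bar\bfi|_{S^c}\}\otimes\operatorname{span}\{\bfi|_S,\bar\bfi|_S\}$ --- all of which check out). Both arguments ultimately rest on the same two pillars: reduction of {\sf X}-shaped block positivity to $M_2\otimes M_2$, and decomposability of block-positive elements of $M_2\otimes M_2$.
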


\begin{proof}
We first note that the matrix (\ref{2x2_matrix}) can be written by
$$
\left(\begin{matrix}
s_{{\bfi},{\bfi}} & \cdot & \cdot & u_{{\bfi},\bar{\bfi}} \\
\cdot & s_{{\bfj},{\bfj}} & u_{{\bfj},\bar{\bfj}} & \cdot \\
\cdot & \bar u_{{\bfi},\bar{\bfi}} & t_{\bar{\bfj},\bar{\bfj}} & \cdot \\
\bar u_{{\bfj},\bar{\bfj}} & \cdot & \cdot & t_{\bar{\bfi},\bar{\bfi}}
\end{matrix}\right)
$$
for $\bfi,\bfj\in B_0$.
Consider the linear map between $M_2$ whose Choi matrix is given by this. Then the positivity of this map
is equivalent to the inequality (\ref{ineq_gew}) by \cite[Lemma 6.1]{han_kye_tri}.
Therefore, we see that the statements (i), (ii), (v) and (vi) are equivalent.
The directions (iv) $\Longrightarrow$ (iii) $\Longrightarrow$ (i) follow from the standard duality.
We complete the proof by proving the direction (vi) $\Longrightarrow$ (iv).

Suppose that (vi) holds.
If the inequality  $\sqrt{s_\bfi t_\bfi}\ge |u_\bfi|$
holds for every index $\bfi \in B_0$, then $W$ is positive, and so there is nothing to prove.
In the other case, there exists a unique index $\bfi \in B_0$ such that
$\sqrt{s_\bfi t_\bfi}< |u_\bfi|$,
by the inequality (\ref{ineq_gew}).
To prove (iv), we may assume that $1\notin S$ because $Q^{T(S^c)}=(Q^\ttt)^{T(S)}$.
For a given subset $S$ with $1\notin S$ and ${\bfi} \in B_0$, put ${\bfj}=\bar{\bfi}^{S}$.
Then we see that $\bfi,\bfj\in B_0$, and
the matrix
$$
D:=X_\bfi(s_\bfi, t_\bfi, u_\bfi)+X_\bfj(s_\bfj, t_\bfj, u_\bfj)
$$
looks like (\ref{2x2_matrix}) without changing the size of $W$.

We first note that $W-D$ is positive. By the inequality (\ref{ineq_gew}),
the matrix $D$ is essentially a block positive matrix in $M_2(M_2)$
if we ignore zero entries,
and so $D=P_0+Q^\tau$ with positive $P_0$ and $Q$
when it is considered as a matrix in $M_2(M_2)$, where $Q^\tau$ is the partial transpose of $Q$
with respect to the second subsystem.
But, $Q^\tau$ is nothing but $Q^{T(S)}$ if $Q$ is considered as a matrix in $\bigotimes_{i\in [n]} M_{d_i}$.
The proof is complete by putting $P=W-D+P_0$.
\end{proof}

\begin{corollary}\label{hhhhhh}
Let $\varrho$ be a multi-qubit state whose diagonal and anti-diagonal parts are given by
$X(a,b,z)$. If $\varrho$ is a PPT mixture then the inequality {\rm (\ref{ineq_bi-sep})}
holds for each $\bfi\in B_0$.
\end{corollary}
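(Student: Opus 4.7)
The plan is to produce an X-shaped genuine entanglement witness $W$ that, when paired with $\varrho$, isolates the anti-diagonal entry $z_{\bfi_0}$ on one side and a sum of $\sqrt{a_\bfj b_\bfj}$ on the other, and then invoke Theorem~\ref{main}(iii). Fix $\bfi_0\in B_0$. If $z_{\bfi_0}=0$ the inequality is trivial, so assume $z_{\bfi_0}\neq 0$. For parameters $s_\bfj,t_\bfj>0$ with $s_\bfj t_\bfj=1$ (one such pair for each $\bfj\in B_0\setminus\{\bfi_0\}$), I would define $W=X(s,t,u)$ by setting $s_{\bfi_0}=t_{\bfi_0}=0$, $u_{\bfi_0}=-\bar z_{\bfi_0}/|z_{\bfi_0}|$, and $u_\bfj=0$ for $\bfj\neq\bfi_0$. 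The choice of $u_{\bfi_0}$ has modulus $1$ and makes $2\re(z_{\bfi_0}u_{\bfi_0})=-2|z_{\bfi_0}|$, which is where the target quantity will enter.

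Next I would verify that $W$ satisfies condition~(vi) of Theorem~\ref{main}. For a pair $\{\bfi,\bfj\}\subset B_0$ of distinct indices, neither equal to $\bfi_0$, the inequality (\ref{ineq_gew}) reads $1+1\ge 0+0$; for a pair $\{\bfi_0,\bfj\}$ it reads $0+1\ge 1+0$. Both hold, so (vi) is satisfied, and Theorem~\ref{main} then yields (iii): $\lan\varrho,W\ran\ge 0$ for every PPT mixture $\varrho$. Since $W$ is X-shaped, only the diagonal and anti-diagonal entries of $\varrho$ contribute to the pairing, so this unpacks to
$$
\lan\varrho,W\ran=\sum_{\bfj\neq\bfi_0}(a_\bfj s_\bfj+b_\bfj t_\bfj)-2|z_{\bfi_0}|\ge 0.
$$

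Finally, this inequality is valid for every admissible choice of the parameters $s_\bfj,t_\bfj$, so one may pass to the infimum over $\{s_\bfj t_\bfj=1,\ s_\bfj,t_\bfj>0\}$. By AM--GM the infimum of $a_\bfj s_\bfj+b_\bfj t_\bfj$ subject to $s_\bfj t_\bfj=1$ equals $2\sqrt{a_\bfj b_\bfj}$, attained at $s_\bfj=\sqrt{b_\bfj/a_\bfj}$ when $a_\bfj,b_\bfj>0$. Summing and dividing by $2$ yields the desired bound $\sum_{\bfj\neq\bfi_0}\sqrt{a_\bfj b_\bfj}\ge |z_{\bfi_0}|$. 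The only subtle point is the degenerate case in which some $a_\bfj$ or $b_\bfj$ vanishes; there the infimum $2\sqrt{a_\bfj b_\bfj}=0$ is not attained but can be approached along $s_\bfj\to\infty$ (or $t_\bfj\to\infty$). Because the inequality displayed above holds for \emph{every} admissible tuple, it passes to this limit, and this limit argument is really the main, and essentially only, technical point of the proof.
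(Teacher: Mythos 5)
Your proof is correct and follows essentially the same route as the paper: you build the same {\sf X}-shaped witness (zero diagonal at $\bfi_0$, unimodular phase-cancelling anti-diagonal entry, $u_\bfj=0$ elsewhere), check inequality (\ref{ineq_gew}), and pair with $\varrho$; the paper simply plugs in the AM--GM optimizers $s_\bfj=\sqrt{b_\bfj/a_\bfj}$ directly rather than taking an infimum afterwards. The only difference is in the degenerate case: you let the witness parameters tend to their limit, whereas the paper perturbs the state to $\varrho+\varepsilon I$ (still a PPT mixture) and lets $\varepsilon\to 0$ --- both are valid.
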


\begin{proof}
We first consider the case when all of $a_{\bfj}$ and $b_{\bfj}$ are nonzero.
We consider the witness $W$ defined by
$$
W=\sum_{{\bfj}\in B_0\setminus \{{\bfi}\}}
X_{\bfj}\left(\sqrt {b_{\bfj} \over a_{\bfj}},\ \sqrt {a_{\bfj} \over b_{\bfj}},\ 0\right)
+X_{\bfi}\left( 0,\ 0,\ -e^{-i\theta_{\bfi}}\right),
$$
where $\theta_{\bfi}$ is the argument of $z_{\bfi}$.
Since $W$ satisfies the inequality (\ref{ineq_gew}), we have
$$
0 \le {1 \over 2} \lan \varrho, W \ran =
\sum_{{\bfj}\in B_0\setminus \{{\bfi}\}}\sqrt{a_{\bfj} b_{\bfj}}- |z_{\bfi}|.
$$
If some of  $a_{\bfj}$ and $b_{\bfj}$ are zero, then we consider the (unnormalized) state $\varrho+\varepsilon I$ with the identity
matrix $I$. Since $\varrho+\varepsilon I$ is still a PPT mixture, we may apply the same argument as above to get the corresponding inequality.
This completes the proof by letting $\varepsilon\to 0$.
\end{proof}

Because the inequality (\ref{ineq_bi-sep}) is also known \cite{Rafsanjani} to be equivalent
to bi-separability for {\sf X}-shaped states, we have the following:

\begin{corollary}\label{esrdh}
For an {\sf X}-shaped multi-qubit state $\varrho=X(a,b,z)$, the following are equivalent:
\begin{enumerate}
\item[(i)]
$\varrho$ is bi-separable;
\item[(ii)]
$\varrho$ is a PPT mixture;
\item[(iii)]
the inequality
{\rm (\ref{ineq_bi-sep})} holds for every ${\bfi}\in B_0$.
\end{enumerate}
\end{corollary}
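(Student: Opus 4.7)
The plan is to close the cycle (i) $\Rightarrow$ (ii) $\Rightarrow$ (iii) $\Rightarrow$ (i), exploiting the fact that two of the three implications are already essentially in hand after the work of Section~3.

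First I would verify (i) $\Rightarrow$ (ii), which is generic and does not use {\sf X}-shape. If $\varrho$ is bi-separable, then by definition it lies in the convex hull of $S$-$T$ separable states as $S\sqcup T$ ranges over bi-partitions of $[n]$. Any separable bi-partite state has a separable (hence positive) partial transpose, so each $S$-$T$ separable state is $S$-$T$ PPT. Convex combinations of such states are PPT mixtures by the definition given in Section~1.

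Next, (ii) $\Rightarrow$ (iii) follows immediately from Corollary \ref{hhhhhh}. Since $\varrho=X(a,b,z)$ is already {\sf X}-shaped, its diagonal and anti-diagonal parts coincide with $\varrho$ itself, so the hypothesis of that corollary is satisfied trivially and the inequality (\ref{ineq_bi-sep}) holds for every $\bfi\in B_0$.

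The only nontrivial direction is (iii) $\Rightarrow$ (i), and this is precisely where the paper defers to an outside reference: the paragraph immediately preceding the statement recalls that the Rafsanjani et al.\ characterization \cite{Rafsanjani} shows (\ref{ineq_bi-sep}) is sufficient for bi-separability of {\sf X}-shaped multi-qubit states. I would simply cite that result to close the loop. The main obstacle is therefore nothing novel within this paper; the whole corollary is a structural assembly of Corollary \ref{hhhhhh} with the external Rafsanjani theorem. If one wanted a self-contained argument for (iii) $\Rightarrow$ (i), the natural route would be to decompose each anti-diagonal pair $X_\bfi(\alpha_\bfi,\beta_\bfi,z_\bfi)$, with diagonal weights borrowed from the budget $\sum_{\bfj\neq\bfi}\sqrt{a_\bfj b_\bfj}$ promised by the inequality, into a convex combination of rank-one projectors onto vectors of the form $|\bfi\rangle\pm e^{i\theta}|\bar\bfi\rangle$, which factor as product vectors across the bi-partition distinguishing $\bfi$ from $\bar\bfi$ (any nontrivial bi-partition works, since $\bfi$ and $\bar\bfi$ differ in every coordinate). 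Combining these with the residual diagonal mass reassembles $\varrho$ as a sum of bi-separable summands; reproducing this carefully is the only real calculation in the whole equivalence.
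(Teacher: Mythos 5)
Your proposal is correct and matches the paper's own argument: the paper likewise obtains the corollary by combining Corollary \ref{hhhhhh} for (ii) $\Rightarrow$ (iii) with the cited result of \cite{Rafsanjani} for (iii) $\Rightarrow$ (i), the implication (i) $\Rightarrow$ (ii) being the generic one. Your optional sketch of a self-contained proof of (iii) $\Rightarrow$ (i) goes beyond what the paper provides, but the main route is the same.
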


We note that the equivalence between (i) and (ii) in Corollary \ref{esrdh} also follows directly from Proposition
\ref{ST_sep}.

\section{Optimal genuine entanglement witnesses}

For a given genuine entanglement witness $W$, we consider the set
$G_W$ of all genuine entanglement $\varrho$ which are detected by $W$
in the sense of $\lan\varrho, W\ran <0$. Following \cite{lew00}, we say that $W$ is {\sl optimal} if the set $G_W$ is maximal.
If $P$ is positive then we have the relation $G_{W+P}\subset G_W$, and $W$ is not optimal if there is a
nonzero positive matrix $P$
such that $W-P$ is still a genuine entanglement witness.
It is very difficult in general to determine if a given witness is optimal or not.
The notion of the spanning property is stronger than the optimality, and easier to check.

We say that a vector $|z\ran\in\bigotimes_{i\in[n]}\mathbb C^{d_i}$ is a {\sl bi-product vector}
if there is a bi-partition $S\sqcup T=[n]$ such that $|z\ran$ is a product vector as an element of
$\left(\bigotimes_{i\in S}\mathbb C^{d_i}\right)\otimes \left(\bigotimes_{i\in T}\mathbb C^{d_i}\right)$.
For a given genuine entanglement witness $W$, we denote by $P_W$ the set of all bi-product vectors $|z\ran$ such that
$$
\langle \bar z | W | \bar z\rangle =\langle |z\ran\lan z|,\, W\rangle =0.
$$
We say that $W$ has the {\sl spanning property} if the set $P_W$ spans the whole space $\bigotimes_{i\in[n]}\mathbb C^{d_i}$.
By the same argument as in \cite{lew00}, we see that the spanning property implies the optimality. It is important to note that
if $W$ has the spanning property then set $G_W$ has nonempty interior, and so has non-zero volume, by the same argument
as in \cite{ha+kye_exposed}. See also \cite{kye_ritsu}.

\begin{theorem}
Suppose that $W=X(s,t,u)$ is an {\sf X}-shaped $n$-qubit genuine entanglement witness of the form {\rm (\ref{notation})}.
Then the following are equivalent:
\begin{enumerate}
\item[(i)]
$W$ is an optimal genuine entanglement witness;
\item[(ii)]
$W$ is a genuine entanglement witness with the spanning property;
\item[(iii)]
There exists ${\bfi}_0\in B_0$ and positive number $r>0$ with the properties:
\begin{itemize}
\item
$s_{{\bfi}_0}=t_{{\bfi}_0}=0$ and $|u_{{\bfi}_0}|=r$,
\item
$\sqrt{s_{\bfi}t_{\bfi}}=r$ and $u_{\bfi}= 0$ for ${\bfi}\neq {\bfi}_0$.
\end{itemize}
\end{enumerate}
\end{theorem}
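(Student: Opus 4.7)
My plan is to close the cycle $(iii) \Rightarrow (ii) \Rightarrow (i) \Rightarrow (iii)$. Since the implication $(ii) \Rightarrow (i)$ is noted in the paragraph preceding the theorem as a consequence of the standard Lewenstein-type argument that the spanning property implies optimality, no further work is needed there; the substantive directions are $(iii) \Rightarrow (ii)$ and $(i) \Rightarrow (iii)$.

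For $(iii) \Rightarrow (ii)$, write $r := |u_{\bfi_0}|$ and expand
$$
\langle \bar z|W|\bar z\rangle = \sum_{\bfi \in B_0 \setminus \{\bfi_0\}} \bigl(s_\bfi |z_\bfi|^2 + t_\bfi |z_{\bar\bfi}|^2\bigr) + 2\re\bigl(u_{\bfi_0} z_{\bfi_0} \bar z_{\bar\bfi_0}\bigr).
$$
The fully product vectors $|\bfi_0\rangle$ and $|\bar\bfi_0\rangle$ lie in $P_W$ since $s_{\bfi_0}=t_{\bfi_0}=0$. For any other basis index $\bfk$, let $K = \{i : \bfk(i) \ne \bfi_0(i)\}$ and take the bipartition $S=K$, $T=[n]\setminus K$. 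The bi-product vector
$$
|z\rangle = \bigl(\alpha|\bfi_0|_S\rangle + \beta|\overline{\bfi_0|_S}\rangle\bigr) \otimes \bigl(\gamma|\bfi_0|_T\rangle + \delta|\overline{\bfi_0|_T}\rangle\bigr)
$$
has support exactly on $\{|\bfi_0\rangle, |\bfk\rangle, |\bar\bfk\rangle, |\bar\bfi_0\rangle\}$ with coefficients $\alpha\gamma, \beta\gamma, \alpha\delta, \beta\delta$ respectively. Assuming WLOG $\bfk \in B_0$, the only nonzero summand above is $s_\bfk|\beta\gamma|^2 + t_\bfk|\alpha\delta|^2$, which by $s_\bfk t_\bfk = r^2$ is at least $2r|\alpha\beta\gamma\delta|$, while the cross term satisfies $|2\re(u_{\bfi_0}\alpha\gamma\bar\beta\bar\delta)| \le 2r|\alpha\beta\gamma\delta|$. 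Hence $\langle \bar z|W|\bar z\rangle \ge 0$, with equality precisely when the magnitude equation $s_\bfk|\beta\gamma|^2 = t_\bfk|\alpha\delta|^2$ and the phase equation $\arg(u_{\bfi_0}\alpha\bar\beta\gamma\bar\delta) = \pi$ both hold. These two real conditions fix only $|\beta\gamma|/|\alpha\delta|$ and one combination of phases, leaving the phase of the ratio $\beta\gamma/(\alpha\delta)$ free. Two generic choices of this phase produce two vectors in $P_W$ whose projections onto $\mathrm{span}\{|\bfk\rangle, |\bar\bfk\rangle\}$ are linearly independent; collecting two such vectors from each of the $2^{n-1}-1$ nontrivial orbits $\{\bfk, \bar\bfk\}$, together with $|\bfi_0\rangle$ and $|\bar\bfi_0\rangle$, yields $2^n$ linearly independent vectors in $P_W$.

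For $(i) \Rightarrow (iii)$, I argue contrapositively by exhibiting a nonzero X-shaped positive $P$ with $W-P$ still an X-shaped GEW whenever (iii) fails. By (vi), the set of $\bfi \in B_0$ with $|u_\bfi| > \sqrt{s_\bfi t_\bfi}$ has exactly one element (more than one would violate (vi); none would make $W$ positive, contradicting witness-ness); call it $\bfi_0$ and set $\Delta_0 = |u_{\bfi_0}| - \sqrt{s_{\bfi_0} t_{\bfi_0}}$, so $\sqrt{s_\bfj t_\bfj} - |u_\bfj| \ge \Delta_0$ for all $\bfj \ne \bfi_0$. If $s_{\bfi_0} = 0 < t_{\bfi_0}$ (or symmetric), subtracting $\epsilon|\bar\bfi_0\rangle\langle\bar\bfi_0|$ leaves $\Delta_0$ and every $\sqrt{s_\bfj t_\bfj} - |u_\bfj|$ unchanged, so (vi) is preserved. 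If $s_{\bfi_0}, t_{\bfi_0} > 0$, subtract $\epsilon |v\rangle\langle v|$ with $|v\rangle = \alpha|\bfi_0\rangle + \beta|\bar\bfi_0\rangle$ chosen so that $\alpha\bar\beta = u_{\bfi_0}/|u_{\bfi_0}|$, $|\alpha|^2 = \sqrt{s_{\bfi_0}/t_{\bfi_0}}$, $|\beta|^2 = \sqrt{t_{\bfi_0}/s_{\bfi_0}}$; direct expansion then gives $\sqrt{s_{\bfi_0}' t_{\bfi_0}'} = \sqrt{s_{\bfi_0}t_{\bfi_0}} - \epsilon$ and $|u_{\bfi_0}'| = |u_{\bfi_0}| - \epsilon$, preserving $\Delta_0$ exactly. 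Either way, optimality forces $s_{\bfi_0} = t_{\bfi_0} = 0$, hence $r := |u_{\bfi_0}| = \Delta_0$. The analogous rank-one subtraction on each $\{\bfj, \bar\bfj\}$-block with $u_\bfj \ne 0$ simultaneously drops $|u_\bfj|$ and $\sqrt{s_\bfj t_\bfj}$ by $\epsilon$, still preserving (vi); this forces $u_\bfj = 0$ for every $\bfj \ne \bfi_0$. Finally, if $\sqrt{s_\bfj t_\bfj} > r$, subtracting $\epsilon|\bfj\rangle\langle\bfj|$ preserves (vi) for small $\epsilon$, so optimality forces $\sqrt{s_\bfj t_\bfj} = r$.

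The main obstacle is the linear-independence bookkeeping in $(iii) \Rightarrow (ii)$: because the bi-product vectors from different orbits share $|\bfi_0\rangle$ and $|\bar\bfi_0\rangle$ in their supports, one must argue carefully that the free phase parameter in each block genuinely yields two distinct directions in $\mathrm{span}\{|\bfk\rangle, |\bar\bfk\rangle\}$, with special care in the degenerate subcase $s_\bfk = t_\bfk$ where the magnitude constraint simplifies to $|\beta\gamma|=|\alpha\delta|$. A secondary subtlety in $(i) \Rightarrow (iii)$ is that every perturbation must be \emph{positive}, not merely Hermitian, which is why the rank-one form $|v\rangle\langle v|$ with the specific $|v\rangle$ chosen above is used throughout.
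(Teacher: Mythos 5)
Your proof is correct and follows essentially the same route as the paper's: (i) $\Rightarrow$ (iii) is obtained by subtracting positive rank-one perturbations block by block and checking that inequality (\ref{ineq_gew}) survives (the paper's $P_1$, $P_2$, $P_3$ are exactly such rank-one blocks, merely aggregated), and (iii) $\Rightarrow$ (ii) is obtained from the same family of bi-product vectors supported on the four-dimensional subspaces attached to bi-partitions. The only cosmetic difference is that you parametrize the zero-pairing vectors by all four coefficients and extract spanning from the equality conditions, where the paper uses the explicit one-parameter family $|z_{ST}(\alpha)\ran$.
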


\begin{proof}
(i) $\Longrightarrow$ (iii).
If $W$ is positive then it is never a genuine entanglement witness. So, there exists a unique ${\bfi}_0\in B_0$ such that
$\sqrt{ s_{{\bfi}_0} t_{{\bfi}_0}}<|u_{{\bfi}_0}|$ and $s_\bfi t_\bfi\neq 0$ for $\bfi\neq \bfi_0$ by (\ref{ineq_gew}). Put
$$
P_1=\sum_{{\bfi}\in B_0, {\bfi}\neq {\bfi}_0}
X_{\bfi}\left( \sqrt{s_{\bfi} \over t_{\bfi}}|u_{\bfi}|,\ \sqrt{t_{\bfi} \over s_{\bfi}}|u_{\bfi}|,\ u_{\bfi}\right)
$$
which is positive. From Theorem \ref{main}, we see that $W-P_1$ is a genuine entanglement witness, by the inequality
$$
\sqrt{s_{{\bfi}_0} t_{{\bfi}_0}} + \sqrt{\left(s_{\bfi} - \sqrt{s_{\bfi} \over t_{\bfi}}|u_{\bfi}|\right)
\left(t_{\bfi} - \sqrt{t_{\bfi} \over s_{\bfi}}|u_{\bfi}|\right)}
 = \sqrt{s_{{\bfi}_0} t_{{\bfi}_0}} + \sqrt{s_{\bfi} t_{\bfi}} - |u_{\bfi}|
 \ge |u_{{\bfi}_0}|
$$
for each ${\bfi}\in B_0$ with ${\bfi}\neq {\bfi}_0$. Therefore, $P_1$ must be zero, and it follows that
$u_{\bfi}= 0$ whenever ${\bfi}\neq {\bfi}_0$. We write $u_{{\bfi}_0}=re^{i\theta}$.

Let $R$ be the minimum
of $\sqrt{s_{\bfi}t_{\bfi}}$ through ${\bfi}\in B_0\setminus\{ {\bfi}_0\}$. If $R\ge r$ then
$$
P_2=W_{{\bfi}_0}\left( s_{{\bfi}_0}, t_{{\bfi}_0}, 0\right)
$$
is positive, and $W-P_2$ is a genuine entanglement witness by Theorem \ref{main} again. Therefore, we have
$s_{{\bfi}_0}=t_{{\bfi}_0}=0$.
If $R<r$, then put
$$
P_3=W_{{\bfi}_0}\left( s_{{\bfi}_0}, t_{{\bfi}_0}, (r-R)e^{i\theta}\right).
$$
Then, we see again that $P_3$ is positive, and $W-P_3$ is a genuine entanglement witness.
Therefore, we also have $s_{{\bfi}_0}=t_{{\bfi}_0}=0$ in any cases, as it was desired.
Since $W$ is a genuine entanglement witness, we have $\sqrt{s_{\bfi}t_{\bfi}}\ge r$ for each
${\bfi}\in B_0 \setminus\{ {\bfi}_0\}$.
If $\sqrt{s_{\bfi}t_{\bfi}}>r$ for some ${\bfi}$, then we can subtract nonzero
diagonal matrix from $W$. Therefore, we have $\sqrt{s_{\bfi}t_{\bfi}}=r$ for each ${\bfi}\in B_0 \setminus\{ {\bfi}_0\}$.

Now, it remains to show the implication (iii) $\Longrightarrow$ (ii). To do this, we use the notations
${\bfzero}_S$ and ${\bfone}_S$ for indices on $S$ which are constant functions with values $0$ and $1$, respectively.
Suppose that (iii) holds. We may assume that ${\bfi}_0={\bfzero}_{[n]}$ and $r=1$, without loss of generality.
For a bi-partition $S\sqcup T=[n]$ and $\alpha\in\mathbb C$, we define two vectors
$|x_S(\alpha)\ran\in \bigotimes_{i\in S}\mathbb C^{d_i}$ and $|y_T(\alpha)\ran\in \bigotimes_{i\in T}\mathbb C^{d_i}$
by
$$
|x_S(\alpha)\ran = |{\bfzero}_S\ran +\alpha |{\bfone}_S\ran,\qquad
|y_T(\alpha)\ran = s\bar u|{\bfzero}_T\ran -\bar\alpha |{\bfone}_T\ran,
$$
with $s=W_{{\bfzero}_S\diamond {\bfone}_T, {\bfzero}_S\diamond {\bfone}_T}$ and
$u=W_{{\bfzero}_{[0]}, {\bfone}_{[0]}}$. Put
$$
\begin{aligned}
|z_{ST}(\alpha)\ran
:&=|x_S(\alpha)\ran\otimes|y_T(\alpha)\ran\\
&=
s\bar u|{\bfzero}_S\diamond {\bfzero}_T\ran
-\bar \alpha |{\bfzero}_S\diamond {\bfone}_T\ran
+s\bar u\alpha |{\bfone}_S\diamond {\bfzero}_T\ran
-|\alpha|^2|{\bfone}_S\diamond {\bfone}_T\ran.
\end{aligned}
$$
It is straightforward to see that $|z_{ST}(\alpha)\ran \in P_W$ and the set $\{|z_{ST}(\alpha)\ran: \alpha\in\mathbb C\}$
spans the $4$-dimensional space $V_{ST}$ spanned by $|{\bfzero}_S\diamond {\bfzero}_T\ran$,
$|{\bfzero}_S\diamond {\bfone}_T\ran$, $|{\bfone}_S\diamond {\bfzero}_T\ran$ and
$|{\bfone}_S\diamond {\bfone}_T\ran$. Now, it is clear that the span of $V_{ST}$ through bi-partition $S\sqcup T=[n]$
coincides with the whole space $\bigotimes_{i\in[n]}\mathbb C^{d_i}$.
\end{proof}

A typical example of three qubit optimal genuine entanglement witness is given by
$$
\left(
\begin{matrix}
\cdot &\cdot &\cdot &\cdot &\cdot &\cdot &\cdot &e^{i\theta} \\
\cdot &s_2 &\cdot &\cdot &\cdot &\cdot &\cdot &\cdot \\
\cdot &\cdot &s_3 &\cdot &\cdot &\cdot &\cdot &\cdot \\
\cdot &\cdot &\cdot &s_4 &\cdot &\cdot &\cdot &\cdot \\
\cdot &\cdot &\cdot &\cdot &1 \slash s_4 &\cdot &\cdot &\cdot \\
\cdot &\cdot &\cdot &\cdot &\cdot &1\slash s_3 &\cdot &\cdot \\
\cdot &\cdot &\cdot &\cdot &\cdot &\cdot &1\slash s_2  &\cdot \\
e^{-i\theta} &\cdot &\cdot &\cdot &\cdot &\cdot &\cdot &\cdot
\end{matrix}
\right),
$$
where $\cdot$ denotes zero.
We see that these witnesses detect all the GHZ type pure state \cite{abls},
as it was discussed in \cite{han_kye_tri}.


\section{Decomposability of {\sf X}-shaped multi-qubit witnesses}

In this section, we characterize the decomposability of {\sf X}-shaped multi-qubit witnesses,
in terms of the entries.
We recall the definition (\ref{par-transpose}) of the partial transpose $T(S)$
for a given subset $S\subset [n]$.
A state $\varrho$ in $\bigotimes_{i\in[n]} M_{d_i}$
is said to be of PPT (positive partial transpose) if $\varrho^{T(S)}$ is
positive for every subset $S$ of $[n]$. Therefore, $\varrho$ is of PPT if and only if it is in the
intersection of convex cones
$$
\mathbb T^S:=\{A\in\bigotimes_{i\in [n]} M_{d_i} : A^{T(S)}  \ {\text{\rm is positive}}\}
$$
through subsets $S$ of $[n]$. On the other hand, a Hermitian matrix $D$ is said to be {\sl decomposable} if
it is in the convex hull of the convex cones $\mathbb T^S$ through subsets $S$ of $[n]$.

It is easy to see that the convex cone $\mathbb D$ of all decomposable matrices in $\bigotimes_{i\in[n]} M_{d_i}$ is closed
by Caratheodory's theorem \cite[Theorem 17.2]{R}, which tells us that the convex hull of a compact set is again compact.
Therefore, we can apply the duality between the convex hulls and the intersections to get the following:

\begin{proposition}\label{dual-PPT}
For a state $\varrho$ and a Hermitian $W$ in $\bigotimes_{i\in[n]} M_{d_i}$, we have the following:
\begin{enumerate}
\item[(i)]
$\varrho$ is a PPT state if and only if $\lan\varrho, W\ran\ge 0$ for each decomposable $W$.
\item[(ii)]
$W$ is decomposable if and only if $\lan\varrho, W\ran\ge 0$ for each PPT state $\varrho$.
\end{enumerate}
\end{proposition}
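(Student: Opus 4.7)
The plan is to identify the cones $\mathbb T^S$ as self-dual under the given bilinear pairing, and then turn the standard duality between intersections of convex cones and closures of sums of duals into the two equivalences.

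First I would verify the self-adjointness of the partial transpose under the pairing $\lan A, B\ran = \tr(AB^\ttt)$, namely
\[
\lan A^{T(S)}, B\ran = \lan A, B^{T(S)}\ran \qquad \text{for all } A, B \in \bigotimes_{i\in[n]} M_{d_i},
\]
which reduces by linearity to elementary tensors and follows from $\tr(a^\ttt b^\ttt) = \tr(ab)$. Combined with the self-duality of the positive cone under the same pairing (positive semidefiniteness of $W$ is equivalent to $\tr(\varrho W^\ttt) \ge 0$ for every positive $\varrho$, since $W^\ttt$ is again positive when $W$ is), this yields $(\mathbb T^S)^* = \mathbb T^S$ for each subset $S$ of $[n]$: indeed, $\lan A, B\ran \ge 0$ for all $B \in \mathbb T^S$ is equivalent, after substituting $C = B^{T(S)}$, to $\lan A^{T(S)}, C\ran \ge 0$ for all positive $C$, hence to $A^{T(S)}$ being positive.

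Next I would invoke the general convex-cone duality. Set of PPT states is $\bigcap_S \mathbb T^S$ and $\mathbb D = \conv \bigcup_S \mathbb T^S = \sum_S \mathbb T^S$. Taking duals term by term,
\[
\mathbb D^* = \Bigl(\sum_S \mathbb T^S\Bigr)^* = \bigcap_S (\mathbb T^S)^* = \bigcap_S \mathbb T^S,
\]
which is precisely the PPT cone. This already gives (ii): a Hermitian $W$ is decomposable if and only if $\lan \varrho, W\ran \ge 0$ for every PPT state $\varrho$, since the bipolar theorem identifies $\mathbb D = \mathbb D^{**}$ (using closedness of $\mathbb D$, which is already recorded in the paragraph preceding the statement via Caratheodory's theorem).

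For (i), the other direction of the same bipolar reasoning gives
\[
\Bigl(\bigcap_S \mathbb T^S\Bigr)^* = \overline{\sum_S (\mathbb T^S)^*} = \overline{\mathbb D} = \mathbb D,
\]
where the last equality again uses closedness of $\mathbb D$. Reading this back, $\varrho$ is a PPT state if and only if $\lan \varrho, W\ran \ge 0$ for every decomposable $W$. The main obstacle is really just the bookkeeping in the first paragraph, making sure the slightly unusual pairing $\tr(\cdot\, W^\ttt)$ behaves correctly so that the partial transpose is self-adjoint and the positive cone is self-dual; once those are in place, the two statements are symmetric halves of the bipolar theorem.
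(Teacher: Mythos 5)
Your proof is correct and follows essentially the same route as the paper, which simply invokes the duality between convex hulls and intersections after noting that $\mathbb D$ is closed by Caratheodory's theorem; you have usefully filled in the details the paper leaves implicit, namely the self-adjointness of $T(S)$ under the pairing $\tr(AB^\ttt)$ and hence the self-duality of each cone $\mathbb T^S$. Nothing is missing.
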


\begin{proposition}\label{ST_sep}
Suppose that an {\sf X}-shaped $n$-qubit state $\varrho=X(a,b,z)$ and a bi-partition $[n]=S\sqcup T$ are given.
If $1\notin S$ then the following are equivalent:
\begin{enumerate}
\item[(i)]
$\varrho$ is $S$-$T$ bi-separable;
\item[(ii)]
$\varrho$ is $S$-$T$ PPT.
\item[(iii)]
the inequality {\rm (\ref{ineq_ppt})}
holds for every ${\bfi}, {\bfj}\in B_0$ with ${\bfi}=\bar{\bfj}^S$.
\end{enumerate}
\end{proposition}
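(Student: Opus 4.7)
The plan is to reduce the proposition to the Peres--Horodecki criterion for $2\otimes 2$ systems by decomposing $\varrho$ into mutually orthogonal two-qubit X-shaped blocks indexed by the orbits of the involution $\bfi\mapsto\bar\bfi^S$ on $B_0$.

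The implication (i)$\Rightarrow$(ii) is immediate. For (ii)$\Rightarrow$(iii) I compute $\varrho^{T(S)}$ using (\ref{pt_unit}): each anti-diagonal contribution $z_\bfi|\bfi\ran\lan\bar\bfi|$ is sent to $z_\bfi|\bar\bfi^S\ran\lan\bar\bfi^{S^c}|$, and a direct check from the definitions gives $\bar\bfi^{S^c}=\overline{\bar\bfi^S}$, so $\varrho^{T(S)}$ remains X-shaped. The assumption $1\notin S$ ensures $\bar\bfi^S\in B_0$ whenever $\bfi\in B_0$, so $\bfi\mapsto\bar\bfi^S$ is a fixed-point-free involution on $B_0$ that permutes the anti-diagonal entries of $\varrho$ while fixing the diagonal. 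Positivity of $\varrho^{T(S)}$ then splits into $2\times 2$ positivity conditions yielding $\sqrt{a_\bfj b_\bfj}\ge|z_{\bar\bfj^S}|$ for every $\bfj\in B_0$, which is precisely (iii).

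For (iii)$\Rightarrow$(i), I partition $B_0$ into unordered pairs $\{\bfi,\bfi'\}$ with $\bfi'=\bar\bfi^S$. Setting $u=\bfi|_S$ and $v=\bfi|_T$, the four indices $\bfi,\bar\bfi,\bfi',\bar{\bfi'}$ become $(u,v)$, $(\bar u,\bar v)$, $(\bar u,v)$, $(u,\bar v)$ in the bi-partition view, so they form a basis of the product subspace $V^S_\bfi\otimes V^T_\bfi$ with $V^S_\bfi=\spa\{|u\ran,|\bar u\ran\}$ and $V^T_\bfi=\spa\{|v\ran,|\bar v\ran\}$. These 4-dimensional product subspaces are pairwise orthogonal as $\{\bfi,\bfi'\}$ ranges over the orbits, so the X-shape of $\varrho$ yields an orthogonal direct-sum decomposition $\varrho=\sum_{\{\bfi,\bfi'\}}\varrho_{\{\bfi,\bfi'\}}$ into two-qubit X-shaped blocks. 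Each summand is positive as a block of the positive $\varrho$, and condition (iii) restricted to the orbit gives $\sqrt{a_\bfi b_\bfi}\ge|z_{\bfi'}|$ and $\sqrt{a_{\bfi'}b_{\bfi'}}\ge|z_\bfi|$, which are exactly the $2\times 2$ positivity conditions for the partial transpose of $\varrho_{\{\bfi,\bfi'\}}$. Hence each block is $2\otimes 2$ PPT, and by the Peres--Horodecki criterion each block is $S$-$T$ separable; summing the separable decompositions produces one for $\varrho$.

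The main technical obstacle is the bookkeeping of the block decomposition: verifying that the 4-element orbit $\{\bfi,\bar\bfi,\bar\bfi^S,\bar\bfi^{S^c}\}$ truly factors through the bi-partition as a basis of $V^S_\bfi\otimes V^T_\bfi$, that the restriction of $\varrho$ is a genuine $2\otimes 2$ X-shape with diagonals $a_\bfi,b_\bfi,a_{\bfi'},b_{\bfi'}$ and anti-diagonals $z_\bfi,z_{\bfi'}$ placed in the correct positions, and that the PPT condition for this two-qubit state translates back to exactly the instance of (iii) associated to the orbit. Once the indexing is set up correctly, the reductions to (\ref{pt_unit}) and to Peres--Horodecki are routine.
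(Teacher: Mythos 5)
Your proposal is correct and follows essentially the same route as the paper: the equivalence (ii)$\Leftrightarrow$(iii) via the relation $X_{\bfi}(a_{\bfi},b_{\bfi},z_{\bfi})^{T(S)}=X_{\bfi}(a_{\bfi},b_{\bfi},0)+X_{\bar{\bfi}^S}(0,0,z_{\bfi})$, and separability by decomposing $\varrho$ into two-qubit {\sf X}-shaped blocks over the orbits of $\bfi\mapsto\bar{\bfi}^S$ and applying the $2\otimes 2$ PPT criterion. Your version merely spells out the product-subspace bookkeeping that the paper leaves implicit.
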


\begin{proof}
The equivalence (ii) $\Longleftrightarrow$ (iii) follows from the relation
$$
X_{\bfi}(a_{\bfi}, b_{\bfi}, z_{\bfi})^{T(S)}
=X_{\bfi}(a_{\bfi}, b_{\bfi}, 0)
+X_{\bar{\bfi}^S}(0,0, z_{\bfi}),
$$
by (\ref{pt_unit}).
It remains to prove (ii) $\Longrightarrow$ (i).
If we write
$$
\varrho={1 \over 2}\sum_{{\bfi}\in B_0}\left(
X_{\bfi}(a_{\bfi}, b_{\bfi}, z_{\bfi}) +
X_{\bar{\bfi}^S} (a_{\bar{\bfi}^S}, b_{\bar{\bfi}^S}, z_{\bar{\bfi}^S})\right),
$$
then every summand is a two qubit PPT state if we ignore zero entries.
Therefore, it is separable as a two qubit state which is $S$-$T$ separable as an $n$-qubit state.
\end{proof}

Since we may interchange the roles of $S$ and $T$ in the bi-partition $[n]=S\sqcup T$,
the assumption $1\notin S$ in Proposition \ref{ST_sep} is actually superfluous.
We put this assumption just because the statement (iii)
makes sense only when $1\notin S$.
Taking a partial transpose of an {\sf X}-shaped multi-qubit matrix is nothing but a rearrangement of anti-diagonal entries,
while fixing diagonal entries.
Conversely, an anti-diagonal entries of an {\sf X}-shaped multi-qubit matrix $W$ may be moved
to any other anti-diagonal places by a suitable operation of partial transpose.
To see this, for given two different indices ${\bfi}, {\bfj}$ in $B_0$, we put
$S=\{i\in [n]: {\bfi}(i)\neq {\bfj}(i)\}$.
Then, we have
\begin{equation}\label{anti-PT}
X_{\bfi}(0,0,z)^{T(S)}=X_{\bfj}(0,0,z).
\end{equation}
Therefore, we have the following:

\begin{theorem}\label{X-PPT}
Let $\varrho=X(a,b,z)$ be an {\sf X}-shaped $n$-qubit state.
Then the following are equivalent:
\begin{enumerate}
\item[(i)]
$\varrho$ is fully bi-separable; 
\item[(ii)]
$\varrho$ is PPT;
\item[(iii)] the inequality
{\rm (\ref{ineq_ppt})}
holds for every choice of ${\bfi},{\bfj}\in B_0$.
\end{enumerate}
\end{theorem}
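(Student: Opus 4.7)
The plan is to establish the three-way equivalence by the cycle (i) $\Longrightarrow$ (ii) $\Longrightarrow$ (iii) $\Longrightarrow$ (i), drawing on Proposition \ref{ST_sep} for individual bi-partitions together with the partial-transpose rearrangement formulas (\ref{pt_unit}) and (\ref{anti-PT}). The step (i) $\Longrightarrow$ (ii) is essentially immediate: full bi-separability yields $S$-$T$ bi-separability, and hence $\varrho^{T(S)} \ge 0$, for every non-empty proper subset $S \subset [n]$, while the trivial subsets $S = \emptyset$ and $S = [n]$ are covered by positivity of $\varrho$ and of $\varrho^{\ttt}$; so $\varrho$ is PPT.

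For (ii) $\Longrightarrow$ (iii) I would argue as follows. Given any pair $\bfi,\bfj \in B_0$, set $S = \{i \in [n] : \bfi(i) \ne \bfj(i)\}$. Because both indices begin with $0$, we have $1 \notin S$, and a direct check of definitions gives $\bar{\bfj}^{S} = \bfi$ and $\bar{\bfj}^{S^c} = \bar{\bfi}$. Equation (\ref{pt_unit}) --- equivalently (\ref{anti-PT}) --- then shows that the partial transpose $T(S)$ transports the anti-diagonal entry $z_{\bfj}$ of $\varrho$ from position $(\bfj,\bar{\bfj})$ to position $(\bfi,\bar{\bfi})$ of $\varrho^{T(S)}$, while leaving the diagonal entries $a_{\bfi}, b_{\bfi}$ in place. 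The PPT hypothesis then forces the $2 \times 2$ principal submatrix of $\varrho^{T(S)}$ at rows and columns $\{\bfi,\bar{\bfi}\}$, with diagonal $a_{\bfi}, b_{\bfi}$ and off-diagonal $z_{\bfj}, \bar{z}_{\bfj}$, to be positive, yielding $\sqrt{a_{\bfi} b_{\bfi}} \ge |z_{\bfj}|$. The degenerate case $\bfi = \bfj$ is handled separately by positivity of $\varrho$ itself.

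For (iii) $\Longrightarrow$ (i), I would fix an arbitrary bi-partition $[n] = S \sqcup T$, interchange $S$ and $T$ if needed so that $1 \notin S$, and observe that the uniform inequality in (iii) is strictly stronger than condition (iii) of Proposition \ref{ST_sep}, which only demands the inequality when $\bfi = \bar{\bfj}^{S}$. Proposition \ref{ST_sep} then delivers $S$-$T$ bi-separability of $\varrho$, and ranging over all bi-partitions gives full bi-separability. The main obstacle, if one can call it that, lies in the bookkeeping inside (ii) $\Longrightarrow$ (iii): one must check carefully that the subset $S$ extracted from the bits where $\bfi$ and $\bfj$ disagree indeed satisfies $1 \notin S$ and $\bar{\bfj}^{S} = \bfi$, and then invoke (\ref{pt_unit}) to confirm that $z_{\bfj}$ truly lands at the $\bfi$-th anti-diagonal slot of $\varrho^{T(S)}$. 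Beyond that, the theorem is an assembly of Proposition \ref{ST_sep} with the key observation that any anti-diagonal entry of an {\sf X}-shaped matrix can be shuttled by some partial transpose to any other anti-diagonal position.
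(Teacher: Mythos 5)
Your proposal is correct and follows essentially the same route as the paper: the theorem is presented there as a direct consequence of Proposition \ref{ST_sep} together with the observation (\ref{anti-PT}) that a partial transpose shuttles any anti-diagonal entry of an {\sf X}-shaped matrix to any prescribed anti-diagonal position while fixing the diagonal. Your write-up merely makes explicit the bookkeeping ($1\notin S$, $\bar{\bfj}^{S}=\bfi$, $\bar{\bfj}^{S^c}=\bar{\bfi}$) that the paper leaves to the reader.
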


\begin{corollary}\label{oiuou}
Every {\sf X}-shaped multi-qubit PPT state is bi-separable.
\end{corollary}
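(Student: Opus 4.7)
The plan is to observe that this corollary is an immediate consequence of Theorem \ref{X-PPT}. That theorem gives the equivalence of being PPT and being \emph{fully} bi-separable (i.e.\ $S$-$T$ bi-separable for every bi-partition $S\sqcup T=[n]$) in the {\sf X}-shaped multi-qubit case. Since bi-separability in the sense defined in the introduction only requires membership in the convex hull of the union of the $S$-$T$ bi-separable cones over all bi-partitions, it is a strictly weaker condition than full bi-separability.

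Concretely, I would argue as follows. Let $\varrho=X(a,b,z)$ be an {\sf X}-shaped multi-qubit PPT state. By Theorem \ref{X-PPT}, $\varrho$ is fully bi-separable; in particular, fixing any single bi-partition $S\sqcup T=[n]$, $\varrho$ is $S$-$T$ bi-separable. Hence $\varrho$ lies in the $S$-$T$ bi-separable cone, which is contained in the convex hull of $S$-$T$ bi-separable cones taken over all bi-partitions, i.e.\ in the bi-separable cone. Therefore $\varrho$ is bi-separable.

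There is no real obstacle here: the only nontrivial input is Theorem \ref{X-PPT}, whose content (PPT $\Longrightarrow$ full bi-separability under the {\sf X}-shape assumption) is exactly what rules out the usual multi-partite phenomenon that a PPT mixture need not be bi-separable. Once one has the equivalence PPT $\Longleftrightarrow$ fully bi-separable, the passage from full bi-separability to ordinary bi-separability is tautological from the definitions given in Section~1.
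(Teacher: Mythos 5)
Your proof is correct and matches the paper's (implicit) argument exactly: the corollary is stated immediately after Theorem \ref{X-PPT} with no separate proof, precisely because PPT implies fully bi-separable by that theorem, and full bi-separability trivially implies bi-separability from the definitions. Nothing is missing.
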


Now, we turn our attention to the decomposability of {\sf X}-shaped entanglement witnesses.

\begin{theorem}\label{decom}
For an {\sf X}-shaped $n$-qubit Hermitian $W=X(s,t,u)$ with nonnegative diagonals, the following are equivalent:
\begin{enumerate}
\item[(i)]
$W$ is decomposable;
\item[(ii)]
$\lan \varrho,W\ran\ge 0$ for every PPT state $\varrho$;
\item[(iii)]
$\lan \varrho,W\ran\ge 0$ for every fully bi-separable state $\varrho$;
\item[(iv)]
the inequality {\rm (\ref{ineq_decom})} holds.
\end{enumerate}
\end{theorem}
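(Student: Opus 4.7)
I will establish the chain (i)$\Leftrightarrow$(ii)$\Rightarrow$(iii)$\Rightarrow$(iv)$\Rightarrow$(i). The equivalence (i)$\Leftrightarrow$(ii) is immediate from Proposition \ref{dual-PPT}(ii). For (ii)$\Rightarrow$(iii), note that every fully bi-separable state is PPT: $S$-$T$ separability for each bi-partition $S\sqcup T=[n]$ yields $\varrho^{T(S)}\ge 0$ for every nonempty proper subset $S$, while $T(\emptyset)$ and $T([n])$ preserve positivity trivially, so the assumed nonnegativity in (ii) transfers from the class of PPT states to the smaller class of fully bi-separable states in (iii).

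For (iii)$\Rightarrow$(iv), the plan is to construct a single fully bi-separable X-shaped test state that extracts inequality (\ref{ineq_decom}). Assuming first that $s_\bfi t_\bfi>0$ for every $\bfi\in B_0$, define $\varrho=X(a,b,z)$ by $a_\bfi=\sqrt{t_\bfi/s_\bfi}$, $b_\bfi=\sqrt{s_\bfi/t_\bfi}$, and $z_\bfi=-e^{-i\theta_\bfi}$ where $\theta_\bfi$ is the argument of $u_\bfi$. Since $\sqrt{a_\bfi b_\bfi}=1=|z_\bfj|$ for all $\bfi,\bfj\in B_0$, Theorem \ref{X-PPT} certifies that $\varrho$ is fully bi-separable. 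A straightforward pairing computation gives
\[
\langle\varrho,W\rangle=2\sum_{\bfi\in B_0}\bigl(\sqrt{s_\bfi t_\bfi}-|u_\bfi|\bigr),
\]
so (iii) forces (iv). The degenerate case where some $s_\bfi t_\bfi=0$ is handled by applying the same argument to $W+\varepsilon I$ (which still satisfies (iii)) and letting $\varepsilon\to 0^+$.

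The substantive implication is (iv)$\Rightarrow$(i), where I will exhibit an explicit decomposition. Set $\alpha_\bfi:=\sqrt{s_\bfi t_\bfi}$, $\beta_\bfi:=|u_\bfi|$, $A:=\sum_{\bfk\in B_0}\alpha_\bfk$, $B:=\sum_{\bfk\in B_0}\beta_\bfk$; if $B=0$ then $W$ is a nonnegative diagonal matrix and is trivially decomposable, so assume $B>0$. For each pair $\bfi,\bfj\in B_0$ define
\[
M_{\bfi,\bfj}:=X_\bfi\bigl(\tfrac{s_\bfi\beta_\bfj}{B},\,\tfrac{t_\bfi\beta_\bfj}{B},\,0\bigr)+X_\bfj\bigl(0,\,0,\,\tfrac{u_\bfj\alpha_\bfi}{A}\bigr).
\]
Marginal sums verify $\sum_{\bfi,\bfj}M_{\bfi,\bfj}=W$. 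When $\bfi=\bfj$, $M_{\bfi,\bfi}$ is a single positive X-block, since the required inequality $\alpha_\bfi\beta_\bfi/B\ge\alpha_\bfi\beta_\bfi/A$ reduces to $A\ge B$, which is exactly (iv). When $\bfi\ne\bfj$, relation (\ref{anti-PT}) identifies $M_{\bfi,\bfj}$ with the partial transpose $T(S)$, $S=\{i:\bfi(i)\ne\bfj(i)\}$, of the X-block $X_\bfi(s_\bfi\beta_\bfj/B,\,t_\bfi\beta_\bfj/B,\,u_\bfj\alpha_\bfi/A)$, whose positivity again reduces to $A\ge B$. Hence $W$ is a sum of members of the cones $\mathbb T^S$, which is decomposability.

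The main obstacle is the apportionment used in (iv)$\Rightarrow$(i): the diagonal weights $\beta_\bfj/B$ and the anti-diagonal weights $\alpha_\bfi/A$ must be chosen so that both the marginals recover $W$ and every piece, after an appropriate partial transpose, lies in an X-block satisfying $\sqrt{st}\ge|u|$. The above choice is designed so that the admissibility condition on each of the pieces collapses to the single bound $A\ge B$, which is exactly hypothesis (\ref{ineq_decom}); once this apportionment is in place, (\ref{anti-PT}) delivers the partial transposes and the decomposition is complete.
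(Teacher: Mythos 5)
Your proposal is correct, and three of the four implications coincide with the paper's argument: (i)$\Leftrightarrow$(ii) via Proposition \ref{dual-PPT}, (ii)$\Rightarrow$(iii) because fully bi-separable states are PPT, and (iii)$\Rightarrow$(iv) via the very same test state $\sum_{\bfi}X_{\bfi}\bigl(\sqrt{t_{\bfi}/s_{\bfi}},\sqrt{s_{\bfi}/t_{\bfi}},-e^{-i\theta_{\bfi}}\bigr)$ certified by Theorem \ref{X-PPT}, with the same $W+\varepsilon I$ regularization for vanishing diagonals.

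Where you genuinely diverge is (iv)$\Rightarrow$(i). The paper first splits $B_0$ into $S_+=\{\bfi:\sqrt{s_{\bfi}t_{\bfi}}\ge|u_{\bfi}|\}$ and $S_-$, peels off a positive summand $W^+$ (absorbing the anti-diagonal entries of the $S_+$ blocks and the PPT-saturated portion of the $S_-$ blocks), and only then distributes the residual diagonal surplus $p_{\bfi},q_{\bfi}$ of $S_+$ against the residual anti-diagonal excess $v_{\bfj}$ of $S_-$ with normalized weights $c_{\bfi},d_{\bfj}$. You instead apportion all of $W$ in one stroke over every ordered pair $(\bfi,\bfj)\in B_0\times B_0$, weighting the diagonal blocks by $\beta_{\bfj}/B$ and the anti-diagonal blocks by $\alpha_{\bfi}/A$; after the partial transpose supplied by (\ref{anti-PT}), the positivity of each piece collapses uniformly to $A\ge B$, which is precisely (\ref{ineq_decom}), and the marginal sums do recover $W$. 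This buys you a cleaner argument in two respects: there is no case split into $S_\pm$ and no separate positive part (the pairs with $\bfi=\bfj$ are handled by the same formula), and since the only denominators are $A$ and $B$ --- with $B=0$ dispatched trivially and $A\ge B>0$ otherwise --- you avoid the second $\varepsilon$-limit and the appeal to closedness of the decomposable cone that the paper needs to cover degenerate diagonals in this direction. The paper's version, in exchange, exhibits the positive component $W^+$ of $W$ explicitly. The only step worth making fully explicit in your write-up is the final one: a finite sum of elements of the cones $\mathbb{T}^{S}$ lies in their convex hull, which is the paper's definition of decomposability.
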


\begin{proof}
Equivalence between (i) and (ii) is a part of Proposition \ref{dual-PPT}.
Since every fully bi-separable state is of PPT, we also have the implication (ii) $\Longrightarrow$ (iii).
To prove the directions (iii) $\Longrightarrow$ (iv) $\Longrightarrow$ (i),
we first consider the case when $s_{\bfi}, t_{\bfi} >0$ for every ${\bfi}\in B_0$.
We also write $u_{\bfi} = |u_{\bfi}| e^{{ i} \theta_{\bfi}}$.
We consider the {\sf X}-shaped state
$$
\varrho=\sum_{{\bfi}\in B_0} X_{\bfi}
\left(\sqrt{t_{\bfi} \over s_{\bfi}},\ \sqrt{s_{\bfi} \over t_{\bfi}},\ -e^{-i\theta_{\bfi}}\right).
$$
This is a fully bi-separable state by Theorem \ref{X-PPT}, and so we have
$$
0 \le {1 \over 2} \langle W, \varrho \rangle = \sum_{{\bfi}\in B_0} \sqrt{s_{\bfi} t_{\bfi}} - \sum_{{\bfi}\in B_0} |u_{\bfi}|.
$$
This completes the proof of (iii) $\Longrightarrow$ (iv).

For the direction (iv) $\Longrightarrow$ (i), we suppose that the inequality (\ref{ineq_decom}) holds.
Put
$$
S_+=\{{\bfi}\in B_0: \sqrt{s_{\bfi}t_{\bfi}}\ge |u_{\bfi}|\},\qquad
S_-=\{{\bfj}\in B_0: \sqrt{s_{\bfj}t_{\bfj}}< |u_{\bfj}|\},
$$
and define
$$
\begin{aligned}
W^{+}
&=\sum_{{\bfi}\in S_+}X_{\bfi}\left(\frac{|u_{\bfi}|s_{\bfi}}{\sqrt {s_{\bfi}t_{\bfi}}},\
   \frac{|u_{\bfi}|t_{\bfi}}{\sqrt {s_{\bfi}t_{\bfi}}},\ u_{\bfi}\right)
  +\sum_{{\bfj}\in S_-}X_{\bfj}(s_{\bfj},t_{\bfj}, \sqrt{s_{\bfj}t_{\bfj}}e^{{ i} \theta_{\bfj}}), \\
W^{-}
&=\sum_{{\bfi}\in S_+}X_{\bfi}\left( \left(1-\frac{|u_{\bfi}|}{\sqrt {s_{\bfi}t_{\bfi}}}\right)s_{\bfi},\
   \left(1-\frac{|u_{\bfi}|}{\sqrt {s_{\bfi}t_{\bfi}}}\right)t_{\bfi},0\right)
  +\sum_{{\bfj}\in S_-}X_{\bfj}(0, 0, (|u_{\bfj}|-\sqrt{s_{\bfj}t_{\bfj}})e^{{ i}\theta_{\bfj}}).
\end{aligned}
$$
Then $W^+$ is positive and $W=W^+ +W^-$.
For the brevity, we write
$$
p_{\bfi}=\left(1-\frac{|u_{\bfi}|}{\sqrt {s_{\bfi}t_{\bfi}}}\right)s_{\bfi},\qquad
q_{\bfi}=\left(1-\frac{|u_{\bfi}|}{\sqrt {s_{\bfi}t_{\bfi}}}\right)t_{\bfi},\qquad
v_{\bfj}=(|u_{\bfj}|-\sqrt{s_{\bfj}t_{\bfj}})e^{{ i}\theta_{\bfj}},
$$
so that
$$
W^{-}=\sum_{{\bfi}\in S_+}X_{\bfi}(p_{\bfi},q_{\bfi},0)+\sum_{{\bfj}\in S_-}X_{\bfj}(0,0,v_{\bfj}).
$$
By the inequality
{\rm (\ref{ineq_decom})},
we have
$$
\sum_{{\bfi}\in S_+}\sqrt{p_{\bfi}q_{\bfi}}
 = \sum_{{\bfi}\in S_+} \left(1-\frac{|u_{\bfi}|}{\sqrt {s_{\bfi}t_{\bfi}}}\right) \sqrt{s_{\bfi} t_{\bfi}}
 = \sum_{{\bfi}\in S_+}\sqrt{s_{\bfi} t_{\bfi}} -|u_{\bfi}|
 \ge \sum_{{\bfj}\in S_-} |u_{\bfj}|-\sqrt{s_{\bfj} t_{\bfj}} 
 = \sum_{{\bfj}\in S_-}|v_{\bfj}|.
$$

Now, we put
$$
c_{\bfi}= \frac {\sqrt{p_{\bfi}q_{\bfi}}} {\sum_{{\bfk}\in S_+}\sqrt{p_{\bfk}q_{\bfk}}} ,\qquad
d_{\bfj}= \frac {|v_{\bfj}|} {\sum_{{\bfk}\in S_-}|v_{\bfk}|}
$$
for each ${\bfi}\in S_+$ and ${\bfj}\in S_-$, and define
$$
W^{{\bfi},{\bfj}}=X_{\bfi}(d_{\bfj}p_{\bfi},d_{\bfj}q_{\bfi},0)+X_{\bfj}(0,0,c_{\bfi}v_{\bfj}).
$$
We decompose $W^-$ as
$$
\begin{aligned}
W^{-} & = \sum_{{\bfi}\in S_+}X_{\bfi}(p_{\bfi},q_{\bfi},0)+\sum_{{\bfj}\in S_-}X_{\bfj}(0,0,v_{\bfj}) \\
& = \sum_{{\bfi}\in S_+} \sum_{{\bfj}\in S_-} X_{\bfi}(d_{\bfj} p_{\bfi},d_{\bfj} q_{\bfi},0)
   +\sum_{{\bfj}\in S_-} \sum_{{\bfi}\in S_+} X_{\bfj}(0,0,c_{\bfi} v_{\bfj})
 = \sum_{j\in S_-}\sum_{i\in S_+}W^{{\bfi},{\bfj}}.
\end{aligned}
$$
By the relation (\ref{anti-PT}), we can take a subset $S_{{\bfi},{\bfj}}$ of $[n]$ such that
$$
[W^{{\bfi},{\bfj}}]^{T(S_{{\bfi},{\bfj}})}
=X_{\bfi}(d_{\bfj}p_{\bfi},d_{\bfj}q_{\bfi},0)+X_{\bfi}(0,0,c_{\bfi}v_{\bfj})=
X_{\bfi}(d_{\bfj}p_{\bfi},d_{\bfj}q_{\bfi},c_{\bfi}v_{\bfj}).
$$
Furthermore, $[W^{{\bfi},{\bfj}}]^{T(S_{{\bfi},{\bfj}})}$ is positive since
$$
\sqrt{(d_{\bfj}p_{\bfi})(d_{\bfj}q_{\bfi})}
=d_{\bfj}\sqrt{p_{\bfi}q_{\bfi}}
=\frac{|v_{\bfj}|\sqrt{p_{\bfi}q_{\bfi}}}{\sum_{{\bfk}\in S_-}|v_{\bfk}|}
\ge\frac {|v_{\bfj}|\sqrt{p_{\bfi}q_{\bfi}}} {\sum_{{\bfk}\in S_+}\sqrt{p_{\bfk}q_{\bfk}}}
= c_{\bfi}|v_{\bfj}|.
$$
This shows that
$$
W=W_+ +\sum_{{\bfj}\in S_-}\, \sum_{{\bfi}\in S_+}W^{{\bfi},{\bfj}}
$$
is decomposable.

For the general cases, we apply the preceding argument to $W+\varepsilon I$ for $\varepsilon>0$. Then,
$W+\varepsilon I$ is decomposable if and only if $\lan\varrho,W+\varepsilon I_{2^n}\ran \ge 0$ for every fully bi-separable state $\varrho$
if and only if
$$
\sum_{{\bfi}\in B_0} \sqrt{(s_{\bfi} +\varepsilon)(t_{\bfi}+\varepsilon)} \ge \sum_{{\bfi}\in B_0} |u_{\bfi}|.
$$
Since the choice of $\varepsilon>0$ is arbitrary, the conclusion follows because the convex cone
of all decomposable matrices is closed.
\end{proof}

We note that the inequality (\ref{ineq_gew}) is much more stronger than the inequality (\ref{ineq_decom}),
and see the following.

\begin{corollary}\label{iuyuhkj}
Every {\sf X}-shaped multi-qubit genuine entanglement witness is decomposable.
\end{corollary}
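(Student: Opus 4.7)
The plan is to deduce inequality (\ref{ineq_decom}) from the witness inequality (\ref{ineq_gew}), so that Theorem \ref{decom} immediately yields decomposability. Theorem \ref{main} tells us that any \textsf{X}-shaped genuine entanglement witness $W=X(s,t,u)$ satisfies $\sqrt{s_\bfi t_\bfi}+\sqrt{s_\bfj t_\bfj}\ge |u_\bfi|+|u_\bfj|$ for all distinct $\bfi,\bfj\in B_0$; the whole argument will be a one-line combinatorial consequence of this family of inequalities.

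I first split into cases according to how many indices fail the positivity condition $\sqrt{s_\bfi t_\bfi}\ge |u_\bfi|$. If no such index exists, then summing $\sqrt{s_\bfi t_\bfi}\ge |u_\bfi|$ over $\bfi\in B_0$ immediately gives (\ref{ineq_decom}). Otherwise, using (\ref{ineq_gew}) with two indices that both violate the diagonal condition would force a contradiction, so there is at most (and hence exactly) one index $\bfi_0\in B_0$ with $\sqrt{s_{\bfi_0} t_{\bfi_0}}<|u_{\bfi_0}|$, while $\sqrt{s_\bfj t_\bfj}\ge |u_\bfj|$ for every $\bfj\neq \bfi_0$.

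Now I apply (\ref{ineq_gew}) to each pair $(\bfi_0,\bfj)$ with $\bfj\neq \bfi_0$, rewriting it as
$$\sqrt{s_\bfj t_\bfj}-|u_\bfj|\ \ge\ |u_{\bfi_0}|-\sqrt{s_{\bfi_0} t_{\bfi_0}}\ >\ 0.$$
Since $n\ge 2$ we have $\#B_0=2^{n-1}\ge 2$, so there is at least one such $\bfj$. Picking any one of them (or summing over all of them) yields
$$\sum_{\bfj\ne \bfi_0}\bigl(\sqrt{s_\bfj t_\bfj}-|u_\bfj|\bigr)\ \ge\ |u_{\bfi_0}|-\sqrt{s_{\bfi_0} t_{\bfi_0}},$$
which, rearranging, is precisely $\sum_{\bfi\in B_0}\sqrt{s_\bfi t_\bfi}\ge \sum_{\bfi\in B_0}|u_\bfi|$. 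Thus (\ref{ineq_decom}) holds, and Theorem \ref{decom} concludes that $W$ is decomposable.

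There is no real obstacle here: the content is already packaged in the earlier theorems, and the only thing to verify is the structural fact that at most one diagonal block of an \textsf{X}-shaped genuine entanglement witness can be non-positive, after which the pairwise witness inequalities absorb the single deficit. The mild point to check carefully is that $\#B_0\ge 2$ in the multi-qubit regime $n\ge 2$, ensuring at least one companion index $\bfj$ is available to pair with $\bfi_0$.
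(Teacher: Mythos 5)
Your proof is correct and follows exactly the route the paper intends: the paper simply asserts that the pairwise witness inequality (\ref{ineq_gew}) is stronger than the decomposability inequality (\ref{ineq_decom}) and invokes Theorem \ref{decom}, whereas you supply the (correct) elementary verification — at most one index $\bfi_0$ can have $\sqrt{s_{\bfi_0}t_{\bfi_0}}<|u_{\bfi_0}|$, and any single companion index absorbs that deficit via (\ref{ineq_gew}).
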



\section{Discussion}

In this paper, we have characterized various kinds of separability and witnesses for {\sf X}-shaped multi-qubit matrices.
As for states, we have the following diagram for implications
between various notions of separability:

$$
\begin{array}{ccccccc}
\mbox{fully separable}
&\Longrightarrow
& \mbox{fully bi-separable}
&\Longrightarrow
& \mbox{bi-separable}
&&\\
&& \Downarrow && \Downarrow\\
&&\mbox{PPT}
&\Longrightarrow
&\mbox{PPT mixture}
&\Longrightarrow
&\mbox{state}
\end{array}
$$
Corollary \ref{esrdh} and Theorem \ref{X-PPT} tell us that the vertical arrows in the above diagram
become actually equivalences for {\sf X}-shaped multi-qubit states.
It is also known that every PPT mixture is bi-separable for some special subclasses of multi-qubit states
\cite{{guhne11},{novo}}. The authors do not know if the converses of two vertical arrows hold or not in general multi-qubit cases.
Examples of three qubit fully bi-separable states which are not fully separable can be found in \cite{kye_multi_dual,abls,brunner}.

In order to consider the dual diagram, we adopt the following terminologies:
We call a Hermitian matrix $W$
\begin{itemize}
\item
{\sl block positive} if $\lan\varrho, W\ran\ge 0$ for every fully separable state $\varrho$.
\item
{\sl bi-block positive} if $\lan\varrho, W\ran\ge 0$ for every fully bi-separable state $\varrho$.
\item
{\sl fully bi-block positive} if $\lan\varrho, W\ran\ge 0$ for every bi-separable state $\varrho$.
\item
{\sl fully bi-decomposable} if $\lan\varrho, W\ran\ge 0$ for every PPT mixture $\varrho$.
\end{itemize}
With these terminologies, a genuine entanglement witness is nothing but a non-positive fully bi-block positive matrix.
Now, we have the following dual diagram:

$$
\begin{array}{ccccccc}
\mbox{block positive}
&\Longleftarrow
& \mbox{bi-block positive}
&\Longleftarrow
& \mbox{fully bi-block positive}
&&\\
&& \Uparrow && \Uparrow\\
&&\mbox{decomposable}
&\Longleftarrow
&\mbox{fully bi-decomposable}
&\Longleftarrow
&\mbox{positive}
\end{array}
$$
The vertical arrows are again equivalences for {\sf X}-shaped multi-qubit witnesses,
by Theorem \ref{decom} and Theorem \ref{main}.
In the tri-partite case of $M_p\otimes M_q\otimes M_r$, the three kinds of positivity in the diagram
may be interpreted as various kinds of
positivity of the corresponding bi-linear maps \cite{han_kye_tri}: $W$ is fully bi-block positive if and only if it is
the Choi matrix of a bi-linear map which is $(p,q,1)$, $(1,q,r)$ and $(p,1,r)$-positive simultaneously.
Bi-block positivity  corresponds to the convex hull of these three kinds of positivity. Finally,
$W$ is block positive if and only if the corresponding bi-linear map is $(1,1,1)$-positive.

One of the merits to consider {\sf X}-shaped witnesses is to get necessary conditions for various kinds of
separability in terms of diagonal and anti-diagonal entries, as in Corollary \ref{hhhhhh}.
This necessary condition for bi-separability is also sufficient for multi-qubit {\sf X}-shaped states, as one may see in
Corollary \ref{esrdh}. We also see the same reasoning for full bi-separability in Theorem \ref{X-PPT}, because
the inequality (\ref{ineq_ppt}) is necessary for full bi-separability and PPT of general multi-qubit states.

We note that the same situations occur for witnesses. For example, we have shown that
the inequality (\ref{ineq_gew}) is equivalent to full bi-block positivity for multi-qubit {\sf X}-shaped case. One can show that
every fully bi-block positive multi-qubit Hermitian matrix satisfies this inequality. To show this,
we can expand the pairing with the following state
$$
X_{\bfi}\left(\sqrt {t_{\bfi} \over s_{\bfi}},\ \sqrt {s_{\bfi} \over t_{\bfi}},\ -e^{-i\theta_{\bfi}}\right)
+X_{\bfj}\left( \sqrt {t_{\bfi} \over s_{\bfi}},\ \sqrt {s_{\bfj} \over t_{\bfj}}\ -e^{-i\theta_{\bfj}}\right),
$$
which is bi-separable by Proposition \ref{ST_sep}.
It is also easy to see that the inequality (\ref{ineq_decom}) is necessary for decomposability of arbitrary
multi-qubit Hermitian matrices.

It would be interesting to look for necessary and sufficient conditions for full separability and block positivity
of {\sf X}-shaped multi-qubit matrices in term of entries.
See Theorem 6.3 in \cite{han_kye_tri} in this direction.
It is also natural to ask if these conditions
are necessary for corresponding properties in general cases.

\bigskip

{\sl Note added in proof}. Recently, an analytic example of three qubit PPT mixture which is not bi-separable has been constructed in \cite{ha_kye}.

\end{document}